%
\documentclass[preprint,12pt]{elsarticle}
\usepackage{graphicx}
%
\usepackage{amsmath}
\usepackage{amsfonts}
\usepackage{amssymb}
\usepackage{mathabx}
\usepackage{algorithm}
\usepackage[noend]{algpseudocode}
\usepackage[title]{appendix}
\usepackage{booktabs}
\usepackage{multirow}
\usepackage{xpatch} 
\usepackage{arydshln} 
\usepackage[update]{epstopdf}
\usepackage[process=all,crop=pdfcrop,cleanup={.dvi,.ps,.pdf,.log,.aux,.bbl}]{pstool}
\usepackage{relsize}
\usepackage{url}
\usepackage{xcolor}
\definecolor{webred}{rgb}{0.5,0,0}
\definecolor{webblue}{rgb}{0,0,0.8}
\usepackage[colorlinks,citecolor=webblue,linkcolor=webred,]{hyperref}
\usepackage{pifont}
\usepackage{siunitx}
\usepackage{mathrsfs}
\usepackage{addfont}
\addfont{OT1}{rsfs10}{\rsfs}

\newcommand{\R}{\mathbb{R}}
\newcommand{\Rn}{\mathbb{R}^n}

\newcommand{\mathbfcal}[1]{\boldsymbol{\mathcal{#1}}}
\newcommand{\gens}{\gamma}
\newcommand{\poly}{s}

\newcommand{\pred}{\varrho}
\newcommand{\fac}{\mathcal{K}}
\newcommand{\concat}{;}
\newcommand{\finally}{\lozenge}
\newcommand{\globally}{\Box}
\newcommand{\until}{U}
\newcommand{\nextOp}{\bigcirc}
\newcommand{\release}{R}
\newcommand{\all}{A}
\newcommand{\initSet}{\mathcal{X}_0}

\algblockdefx{MRepeat}{EndRepeat}{\textbf{repeat}}{}
\algnotext{EndRepeat}
\algblockdefx{MFor}{EndFor}{\textbf{for}}{\textbf{do}}
\algnotext{EndFor}

\makeatletter
\let\OldStatex\Statex
\renewcommand{\Statex}[1][3]{%
  \setlength\@tempdima{\algorithmicindent}%
  \OldStatex\hskip\dimexpr#1\@tempdima\relax}
\makeatother

\usepackage{etoolbox}
\makeatletter
\patchcmd{\ALG@doentity}{\item[]\nointerlineskip}{}{}{}
\makeatother

\newcommand{\cmark}{\ding{51}}%
\newcommand{\xmark}{\ding{55}}%

\allowdisplaybreaks

\setlength{\textfloatsep}{8pt}

\makeatletter
\xpatchcmd{\algorithmic}{\itemsep\z@}{\itemsep=0.5ex plus1pt}{}{}
\makeatother


\newtheorem{definition}{Definition}
\newtheorem{proposition}{Proposition}
\newtheorem{theorem}{Theorem}
\newproof{proof}{Proof}

\begin{document}

\begin{frontmatter}
\title{Fully Automated Verification of Linear Time-Invariant Systems against Signal Temporal Logic Specifications via Reachability Analysis}

\journal{Nonlinear Analysis: Hybrid Systems}

\author[SBU]{Niklas Kochdumper} 
\author[SBU]{Stanley Bak}

\affiliation[SBU]{organization={Stony Brook University},
			      city={Stony Brook (NY)},
				  country={USA}}

\begin{abstract}
While reachability analysis is one of the most promising approaches for formal verification of dynamic systems, a major disadvantage preventing a more widespread application is the requirement to manually tune algorithm parameters such as the time step size. Manual tuning is especially problematic if one aims to verify that the system satisfies complicated specifications described by signal temporal logic formulas since the effect the tightness of the reachable set has on the satisfaction of the specification is often non-trivial to see for humans. We address this problem with a fully-automated verifier for linear systems, which automatically refines all parameters for reachability analysis until it can either prove or disprove that the system satisfies a signal temporal logic formula for all initial states and all uncertain inputs. Our verifier combines reachset temporal logic with dependency preservation to obtain a model checking approach whose over-approximation error converges to zero for adequately tuned parameters. 
While we in this work focus on linear systems for simplicity, the general concept we present can equivalently be applied for nonlinear and hybrid systems.

\end{abstract}

\begin{keyword}
	Reachability analysis \sep linear systems \sep signal temporal logic \sep formal verification \sep model checking \sep automation
\end{keyword}

\end{frontmatter}

\section{Introduction}

Temporal logic enjoys an increasing popularity in science and engineering, where it is for example used to specify desired behaviors for robots \cite{Plaku2016} and power systems \cite{Xu2017b},
or to formalize traffic rules for road \cite{Maierhofer2020} and marine \cite{Krasowski2021} traffic. However, while the expressiveness of temporal logic is on the one hand advantageous since it allows to model complex behaviors, it unfortunately also makes it very hard to check if a system satisfies a temporal logic formula.
Consequently, while automated verification of linear systems is already possible for simple specifications given by unsafe sets \cite{Wetzlinger2022}, such an approach does not yet exist for the more challenging case of temporal logic specifications.
In this work we address this shortcoming with an automated verifier for linear systems, which decides whether or not the system satisfies a signal temporal logic (STL) \cite{Maler2004} formula for all initial states and uncertain inputs.


\subsection{State of the Art}

 
 

Approaches that check if a system satisfies a temporal logic specification can be divided into the two groups runtime verification and static verification. Runtime verification, which is often realized via monitors \cite{Donze2013,Maler2004}, treats the system as a black box and checks if the observations obtained from the system satisfy the specification. Static verification, on the other hand, considers the case where a model of the system is available and checks if all executions of the model satisfy the temporal logic specification. Since our method falls into the static verification category, we focus on this group from now on.
A standard approach for formal verification against temporal logic specifications is to convert the temporal logic formula into an equivalent acceptance automaton \cite[Sec.~2.7]{Fisher2011}. 
If the system itself can be represented by a finite state automaton, formal verification reduces to checking if there exists an accepting trace for the automaton obtained by taking the automaton product of the system and the acceptance automaton for the negated temporal logic formula \cite[Sec.~5.2]{Baier2008}, which can be realized with standard automaton analysis tools \cite{Gaiser2009}.

\begin{figure}[!tb] 
	\centering
	\includegraphics[width = 0.98\textwidth]{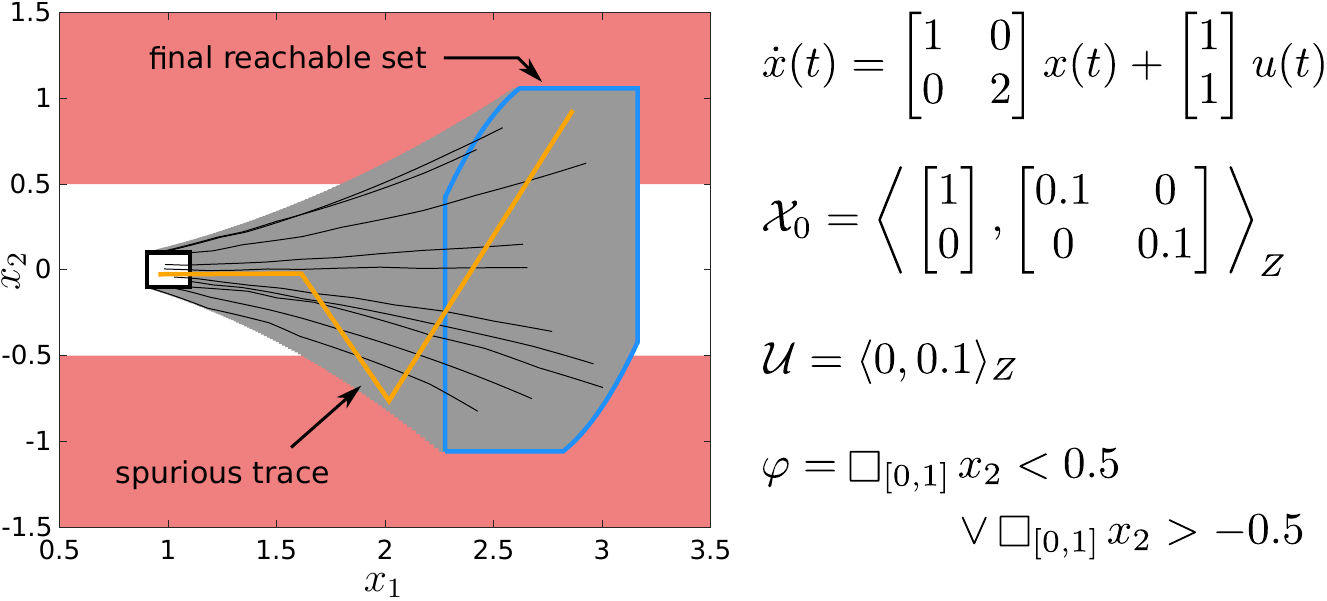}
	\caption{Exemplary verification problem, where the initial set is shown in white with a black border, the reachable set is depicted in gray, the unsafe sets defined by the STL formula $\varphi$ are visualized in red, and exemplary trajectories of the system are shown in black. Even though the system satisfies the temporal logic formula $\varphi$, the reachset temporal logic approach \cite{Roehm2016b} classifies the system as unsafe since it checks if the whole reachable set satisfies the predicates $x_2 < 0.5$ and $x_2 > -0.5$. This procedure produces so-called spurious traces (orange), which are not consistent with the system dynamics.}
	\label{fig:example}
\end{figure}

This concept carries over to dynamic systems where the behavior is described by a differential equation. Many approaches \cite{Tabuada2003,Yordanov2013,Gao2021} focus on discrete-time systems since they can be transformed into an equivalent transition system: By adequately choosing the control input, discrete-time linear systems can be represented by a finite state transition system \cite{Tabuada2003}, and therefore be verified with the same methods as finite state automata. Another approach \cite{Yordanov2013} considers piecewise linear discrete-time systems and uses reachability analysis to refine a finite state automaton abstraction of the system. Reachability analysis is also used by a method for general discrete-time systems \cite{Gao2021}, which transforms the temporal logic specification into a temporal logic tree rather than an acceptance automaton. 

Also methods for continuous-time systems apply the acceptance automaton concept \cite{Bresolin2013,Frehse2018}. They consider temporal logic specifications given in HyLTL \cite{Bresolin2013}, an extension of linear temporal logic \cite{Pnueli1977} to hybrid systems, or as pattern templates \cite{Frehse2018}, which are close to natural language and therefore very intuitive for humans. One major disadvantage of these methods is that reachability analysis for hybrid systems, which are obtained from the automaton product of the continuous dynamics with the acceptance automaton, is quite challenging and often provides very conservative results. Apart from the acceptance automaton concept, there also exist other strategies \cite{Lamport1993,Chen2020,Yu2022,Lee2021,Bae2019}: An early approach \cite{Lamport1993} introduced TLA$^+$, a special logic for describing the behavior of hybrid systems. This enables the formal verification using a theorem prover, which however requires an analytical solution for the differential equation and is therefore only applicable for very simple systems. Other approaches \cite{Yu2022,Lee2021,Bae2019} apply SMT solvers to verify hybrid systems against STL specifications. However, this technique is restricted to systems without inputs and suffers from the curse-of-dimensionality since SMT solvers split the state space. Yet another method \cite{Chen2020} exploits connections between temporal logic operators and reachability problems to compute an approximation for the set of states from which an STL formula can be satisfied based on the Hamilton-Jacobi framework \cite{Mitchell2005}. Finally, it is also possible to convert an STL formula to reachset temporal logic (RTL) \cite{Roehm2016b}, a special logic that can be directly evaluated on the reachable set. All existing approaches have the disadvantage that parameters such as the time step size have to be tuned manually by the user for the verification to succeed.

\subsection{Contribution}

In this work we present the first fully-automated verifier for linear systems and signal temporal logic specifications, which is based on the reachset temporal logic approach \cite{Roehm2016b}. One major drawback of the previous RTL method is demonstrated by the example shown in Fig.~\ref{fig:example}: RTL checks if the atomic predicates are satisfied by the whole reachable set, which yields very conservative results, especially if the reachable set becomes large. We solve this problem by keeping track which parts of the initial set and input set satisfy or violate the predicates at specific points in time, as it is visualized in Fig.~\ref{fig:temporalLogic}. In summary, our verifier has the following features:
\begin{itemize}
	\item[$\bullet$] The algorithm is guaranteed to converge to the solution in finite time for decidable problem instances.
	\item[$\bullet$] Our verifier is fully-automated, so no manual parameter tuning is required.
	\item[$\bullet$] In case the system violates the specification we return a counterexample in form of a falsifying trajectory, which might provide valuable insights for a system engineer.
	\item[$\bullet$] Since our verifier naturally divides the initial set and input set into parts that satisfy or violate the temporal logic formula, it can also be used to determine a safe set of states or cut away all states that yield a violation, which is very useful for controller synthesis and set-based prediction.
\end{itemize}
We will demonstrate all of these features on several interesting benchmarks.

\subsection{Overview}

The remainder of the paper is structured as follows: 
We first introduce some preliminaries including notations, set representations as well as operations, reachable sets, and temporal logic in Sec.~\ref{sec:prelim}.
Afterward, we specify the formal problem definition in Sec.~\ref{sec:problem}. To keep track of dependencies between reachable states and initial states as well as inputs, we require a special type of reachability analysis, which is presented in Sec.~\ref{sec:reach}. Next, we introduce our model checking approach in Sec.~\ref{sec:modelChecking}, before we describe the overall verifier in Sec.~\ref{sec:verify}. Finally, in Sec.~\ref{sec:numEx}, we demonstrate the performance of our automated verifier on several challenging benchmarks, before we provide an outlook to future directions in Sec.~\ref{sec:future}.

\section{Preliminaries}
\label{sec:prelim}

Let us first introduce the notation as well as some preliminaries and definitions.

\subsection{Notation}

Scalars and vectors are denoted by lowercase letters, whereas matrices are denoted by uppercase letters.
Given a vector $v \in \R^{n}$, $v_{(i)}$ represents the $i$-th entry and $\| v\|_p$ is the p-norm. Moreover, all vectors considered in this work are column vectors.
For a matrix $M \in \R^{w \times n}$, $M_{(i,\cdot)}$ refers to the $i$-th row and $M_{(\cdot,j)}$ to the $j$-th column.
The identity matrix of dimension~$n$ is denoted by $I_n$ and we use $\mathbf{0}$ and $\mathbf{1}$ to represent vectors and matrices of proper dimension containing only zeros or ones. Moreover, we denote the horizontal concatenation of two matrices $M_1, M_2$ by $[M_1~M_2]$ and the vertical concatenation by $[M_1 \concat M_2]$.
The floor function $\lfloor x \rfloor$ with $x \in \R$ returns the largest integer that is smaller than $x$, and $x~\text{mod}~y$ with $x,y \in \R$ denotes the modulo operator.
While sets are denoted by standard calligraphic letters $\mathcal{S}$, interval matrices are denoted by bold calligraphic letters: $\mathbfcal{M} = [\underline{M},\overline{M}] = \{ M \in \R^{w \times n} ~ | ~ \underline{M} \leq M \leq \overline{M} \}$, where the inequality is evaluated element-wise. $||\mathbfcal{M}||_F$ denotes the Frobenius norm of an interval matrix.
Intervals are a special case of interval matrices, where the lower and upper bounds are vectors.
Tuples are denoted by bold uppercase letters: Given a $n$-tuple $\mathbf{H} = (h_1,\dots,h_n)$, $|\mathbf{H}| = n$ denotes the cardinality of the tuple and $\mathbf{H}_{(i)} = h_i$ refers to the $i$-th entry of tuple $\mathbf{H}$. Moreover, given two tuples $\mathbf{H} = (h_1, \dots ,h_n)$ and $\mathbf{K} = (k_1, \dots, k_w)$, $( \mathbf{H}, \mathbf{K} ) = (h_1, \dots ,h_n,k_1,\dots,k_w)$ denotes their concatenation, operation $\mathbf{H} \setminus \mathbf{K}$ removes all elements in $\mathbf{K}$ from $\mathbf{H}$, and the empty tuple is denoted by $\emptyset$. Finally, given a matrix $M \in \R^{w \times n}$ and a tuple $\mathbf{H} = (h_1,\dots,h_m)$ with $h_1,\dots,h_m \in \mathbb{N}_{\leq n}$, we use the shorthand notation $M_{(\cdot,\mathbf{H})} = [M_{(\cdot,h_1)} ~ \dots ~ M_{(\cdot,h_m)}]$.

\subsection{Set Representations and Operations}

Given sets $\mathcal{S}_1,\mathcal{S}_2 \subset \Rn{}$ and a matrix $M \in \R^{w \times n}$, we require the set operations linear map $M \mathcal{S}_{1}$, Minkowski sum $\mathcal{S}_{1} \oplus \mathcal{S}_{2}$, Cartesian product $\mathcal{S}_1 \times \mathcal{S}_2$, intersection $\mathcal{S}_1 \cap \mathcal{S}_2$, and convex hull $conv(\mathcal{S}_{1},\mathcal{S}_{2})$, which are defined as 
\begin{align}
	& M \mathcal{S}_{1} := \{ M s ~|~ s \in \mathcal{S}_{1} \} , \label{eq:defLinTrans} \\
	& \mathcal{S}_{1} \oplus \mathcal{S}_{2} := \{ s_1 + s_2 ~|~ s_1 \in \mathcal{S}_{1}, s_2 \in \mathcal{S}_{2} \} , \label{eq:defMinSum} \\
	& \mathcal{S}_{1} \times \mathcal{S}_{2} := \{ [s_1 \concat s_2] ~|~ s_1 \in \mathcal{S}_{1}, s_2 \in \mathcal{S}_{2} \} , \label{eq:defCartProd} \\
	& \mathcal{S}_1 \cap \mathcal{S}_2 := \{s ~|~ s \in \mathcal{S}_1 \wedge s \in \mathcal{S}_2 \}, \label{eq:defIntersection} \\
\begin{split} \label{eq:defConvHull}
	& conv(\mathcal{S}_{1},\mathcal{S}_{2}) := \bigg \{\sum_{i=1}^{n+1} \lambda_i\,s_i~\bigg|~s_i \in \mathcal{S}_1 \cup \mathcal{S}_2,~\lambda_i \geq 0,~\sum_{i=1}^{n+1} \lambda_i = 1 \bigg\}.
\end{split}
\end{align}
In this paper, we represent reachable sets by zonotopes:

\begin{definition} \label{def:zonotope}
(Zonotope) Given a center vector $c \in \Rn{}$ and a generator matrix $G \in \R^{n \times \gens{}}$, a zonotope $\mathcal{Z} \subset \Rn{}$ is
\begin{equation*}
	\mathcal{Z} := \bigg\{ c + \sum_{i = 1}^{\gens} G_{(\cdot,i)} \, \alpha_i ~\bigg| ~ \alpha_i \in [-1,1] \bigg\},
\end{equation*}
where the scalars $\alpha_i$ are called factors. We use the shorthand $\mathcal{Z} = \langle c,G \rangle_Z$.
\end{definition}
For zonotopes $\mathcal{Z}_{1} = \langle c_1,G_1 \rangle_Z,\mathcal{Z}_{2} = \langle c_2,G_2 \rangle_Z \subset \Rn{}$ with $\gamma_1$ and $\gamma_2$ generators, respectively, linear map, Minkowski sum, and convex hull can be computed as \cite[Eq.~(2.1)-(2.2)]{Althoff2010a}
\begin{align}
	& M \mathcal{Z}_{1} = \langle M c_1, M G_1 \rangle_Z , \label{eq:zonoLinTrans} \\
	& \mathcal{Z}_{1} \oplus \mathcal{Z}_{2} = \langle c_1 + c_2, [G_1~G_2] \rangle_Z , \label{eq:zonoMinSum} \\
	& conv(\mathcal{Z}_{1},\mathcal{Z}_{2}) \subseteq \big \langle 0.5(c_1 + c_2),~\big[ 0.5 (G_1 + G^{(1)}_2) \nonumber \\
	& \qquad \qquad \qquad \qquad  0.5 (G_1 - G_2^{(1)})~ 0.5(c_1 - c_2)~ G_2^{(2)}  \big] \big \rangle_Z \label{eq:zonoConvHull}
\end{align} 
with
\begin{equation*}
	G_2^{(1)} = [G_{2(\cdot,1)} \, \dots \, G_{2(\cdot,\gens{}_1)}], ~ G_2^{(2)} = [G_{2(\cdot,\gens{}_1+1)} \, \dots \, G_{2(\cdot,\gens{}_2)}],
\end{equation*}
where we assume without loss of generality that $\mathcal{Z}_{2}$ has more generators than $\mathcal{Z}_{1}$. In addition, the multiplication $\mathbfcal{I} \, \mathcal{Z}$ of an interval matrix $\mathbfcal{I}$ with a zonotope $\mathcal{Z}$ can be enclosed as specified in \cite[Thm.~4]{Althoff2010a}, and the operation $\texttt{interval}(\mathcal{Z})$ returns
the interval enclosure according to \cite[Prop. 2.2]{Althoff2010a}.

Another set representation we require are polytopes, for which we consider the halfspace representation:

\begin{definition}
	(Polytope) Given a constraint matrix $C \in \R^{\poly \times n}$ and a constraint offset $d \in \R^{\poly}$, a polytope $\mathcal{P} \subseteq \R^{n}$ is defined as
	\begin{equation*}
		\mathcal{P} := \big \{ x \in \Rn ~\big|~ C \,x \leq d \big \}.
	\end{equation*}
	We use the shorthand $\mathcal{P} = \langle C,d \rangle_P$.
	\label{def:polytope}
\end{definition}
The intersection of two polytopes $\mathcal{P}_1 = \langle C_1,d_1\rangle_P,\mathcal{P}_2 = \langle C_2,d_2 \rangle_P \subseteq \Rn$ can be computed as
\begin{equation}
	\mathcal{P}_1 \cap \mathcal{P}_2 = \big \langle [C_1 \concat C_2],[d_1 \concat d_2] \big \rangle_P.
\end{equation}
Computation of the intersection might result in redundant halfspaces, which can be removed using linear programming.

\subsection{Reachability Analysis}

The reachable set of a dynamic system is defined as follows:

\begin{definition}
	(Reachable Set) We consider a dynamic system whose behavior is described by the differential equation
	\begin{equation}
		\dot x(t) = f\big(x(t),u(t)\big),
		\label{eq:ode}
	\end{equation}
	where $x(t) \in \Rn$ is the system state and $u(t) \in \R^m$ is the input. Given an initial set $\initSet \subset \Rn$ and an input set $\mathcal{U} \subset \R^m$, the reachable set at time $t \geq 0$ is defined as
	\begin{equation*}
		\mathcal{R}^{\text{\normalfont e}}(t) := \big \{ \xi(t,x_0,u(\cdot))~\big |~ x_0 \in \initSet, \forall \theta \in [0,t]:~ u(\theta) \in \mathcal{U} \big \},
	\end{equation*}
	where $\xi(t,x_0,u(\cdot))$ denotes the solution to \eqref{eq:ode} for the initial state $x_0 = x(0)$ and the input signal $u(\cdot)$. 
	\label{def:reachSet}
\end{definition}
Since the exact reachable set $\mathcal{R}^{\text{e}}(t)$ as defined in Def.~\ref{def:reachSet} cannot be computed in general, the goal of reachability analysis is to calculate a tight enclosure $\mathcal{R}(t) \supseteq \mathcal{R}^{\text{\normalfont e}}(t)$ instead. Moreover, it is common practice to compute the reachable set for consecutive time intervals $\tau_i = [t_i,t_{i+1}]$ with $t_i = i \cdot \Delta t$, $i \in \{0,\dots,t_{\text{end}}/\Delta t\}$, where $\Delta t$ is the time step size and $t_{\text{end}}$ is the final time. We assume without loss of generality that time starts at $t_0 = 0$ and $t_{\text{end}}$ is a multiple of $\Delta t$. Consequently, the reachable set for the whole time horizon $\mathcal{R}([0,t_{\text{end}}])$ is given by a sequence
\begin{equation}
	\mathcal{R}(t_0),\mathcal{R}(\tau_0),\mathcal{R}(t_1),\mathcal{R}(\tau_1),\dots,\mathcal{R}(\tau_{t_{\text{end}}/\Delta t -1}),\mathcal{R}(t_{\text{end}})
	\label{eq:reachSeq}
\end{equation} 
of time point reachable sets $\mathcal{R}(t_i)$ and time interval reachable sets $\mathcal{R}(\tau_i)$. For the remainder of the paper, we use the shorthand notation $\forall u(\cdot) \in \mathcal{U}$ to refer to the set $\{u(\cdot)\,|\, \forall \theta \in [0,t_{\text{end}}]:\,u(\theta) \in \mathcal{U} \}$ of all input signals that are contained in the input set $\mathcal{U}$ at all times.

Some reachability algorithms preserve dependencies between initial states and inputs and the corresponding reachable states \cite{Kochdumper2020c}. For those algorithms, the computed enclosure of the reachable set can be used to construct a symbolic function that approximates the solution to the differential equation within a guaranteed error bound:
\begin{definition}(Dependency Preservation)
	Let $\mathcal{R}(t)$ be an enclosure of the reachable set for the initial set $\mathcal{X}_0$ and the set of uncertain inputs $\mathcal{U}$ computed with a reachability algorithm $\mathcal{A}$. The reachability algorithm $\mathcal{A}$ is dependency preserving if the computed reachable set $\mathcal{R}(t)$ can be used to construct a function $\mu(t,x_0,u(\cdot))$ that approximates the solution $\xi(t,x_0,u(\cdot))$ of the differential equation in \eqref{eq:ode} within some time-varying error bound $\mathcal{E}(t)$ for all initial states $x_0 \in \mathcal{X}_0$ and input signals $u(\cdot) \in \mathcal{U}$:
	\begin{equation*}
		\forall t \in [0,t_{\text{\normalfont end}}], \forall x_0 \in \mathcal{X}_0, \forall u(\cdot) \in \mathcal{U}: ~\xi(t,x_0,u(\cdot)) \in \mu(t,x_0,u(\cdot)) \oplus \mathcal{E}(t) \, \subseteq \, \mathcal{R}(t),
	\end{equation*}
	where the trivial solution $\mu(t,x_0,u(\cdot)) = \mathbf{0}$ and $\mathcal{E}(t) = \mathcal{R}(t)$ is only allowed if the reachable set enclosure $\mathcal{R}(t)$ is singleton.
	\label{def:dependency}
\end{definition}
The main advantage of dependency preservation is that with the analytical relation $\mu(t,x_0,u(\cdot))$ between initial states and inputs and the corresponding reachable states enclosures of reachable sets for single initial states and inputs can be computed very efficiently in terms of a simple function evaluation.

\subsection{Temporal Logic}

We consider specifications in signal temporal logic \cite{Maler2004}:

\begin{definition}
	(Signal Temporal Logic) The syntax of a signal temporal logic formula over a finite set of atomic predicates $\pred \in \mathcal{AP}$ is
	\begin{equation*}
		\varphi := \pred \,|\, \neg \varphi \,|\, \varphi_1 \vee \varphi_2 \,|\, \varphi_1 \wedge \varphi_2 \,|\, \varphi_1 \, \until_{[a,b]} \, \varphi_2 \,|\, \varphi_1 \, \release_{[a,b]} \, \varphi_2 \,|\, \finally_{[a,b]} \, \varphi  \,|\, \globally_{[a,b]} \, \varphi \,|\, \nextOp_a \, \varphi
	\end{equation*}
	with $a,b \in \R_{\geq 0}$ and $b \geq a$, where we in addition to the until operator $\varphi_1 \, \until_{[a,b]} \, \varphi_2$ consider the operators release $\varphi_1 \, \release_{[a,b]} \, \varphi_2 := \neg( \neg \varphi_1 \, \until_{[a,b]} \, \neg \varphi_2)$, finally $\finally_{[a,b]} \, \varphi := \mathrm{true}~ \until_{[a,b]} \, \varphi$, globally $\globally_{[a,b]} \, \varphi := \neg \finally_{[a,b]} \, \neg \varphi$, and next $\nextOp_a \, \varphi := \finally_{[a,a]} \, \varphi$. For a trace $\xi(t)$, the semantics of a signal temporal logic formula is defined as follows:
	\begin{equation*}
		\begin{split}
			\xi \vDash \pred & ~~ \Leftrightarrow ~~ \pi_{\pred}(\xi(0)) = \mathrm{true} \\
			\xi \vDash \neg \varphi & ~~ \Leftrightarrow ~~ \neg (\xi \vDash \varphi) \\
			\xi \vDash \varphi_1 \vee \varphi_2 & ~~ \Leftrightarrow ~~ (\xi \vDash \varphi_1) \vee (\xi \vDash \varphi_2) \\
			\xi \vDash \varphi_1 \, \until_{[a,b]} \, \varphi_2 & ~~\Leftrightarrow ~~ \exists t \in [a,b]: \xi_t \vDash \varphi_2 ~\wedge~ \forall t' \in [0,t): \xi_{t'} \vDash \varphi_1
		\end{split}
	\end{equation*}
	using a predicate evaluation function $\pi_{\pred}$ and notation $\xi_a(t) := \xi(t+a)$. The semantics for the conjunction follows from the equality $\varphi_1 \wedge \varphi_2 = \neg(\neg \varphi_1 \vee \neg \varphi_2)$.
\end{definition}
In this work we consider atomic predicates given by linear inequality constraints such as $2 \, x_1 - 3 \, x_2 \leq 1$.
Please note that this also includes polytope containment since $x \in \mathcal{P}$ with $\mathcal{P} = \langle C,d \rangle_P$ can be represented as $C_{(1,\cdot)} x \leq d_{(1)} \wedge \dots \wedge C_{(\poly,\cdot)} x \leq d_{(\poly)}$. Reachset temporal logic \cite[Def.~2]{Roehm2016b} is a special type of logic that can be directly evaluated on a reach sequence \eqref{eq:reachSeq}:

\begin{definition}
	(Reachset Temporal Logic) The syntax of a reachset temporal logic formula\footnote{In contrast to \cite[Def.~2]{Roehm2016b}, we use a shortened definition of RTL that is restricted to the relevant parts.} over a finite set of atomic predicates $\pred \in \mathcal{AP}$ is
	\begin{equation*}
		\varphi := \all \,\pred ~|~ \varphi_1 \vee \varphi_2 ~|~ \varphi_1 \wedge \varphi_2 ~|~  \nextOp_a \, \varphi
	\end{equation*}
	with $a = i \cdot \Delta t/2$ and $i \in \mathbb{N}_{\geq 0}$, where $\Delta t$ is the time step size. The all operator $\all \, \pred$ specifies that a predicate is satisfied for all states inside the reachable set, and the next operator $\nextOp_a \, \varphi$ refers to a time point reachable set if $a$ is an integer multiple of $\Delta t$ and to a time interval reachable set otherwise. For a reachable set $\mathcal{R}(t)$, the semantics of a reachset temporal logic formula is defined as follows:
	\begin{equation*}
		\begin{split}
			\mathcal{R} \vDash \all \, \pred & ~~ \Leftrightarrow ~~ \forall r \in \mathcal{R}(0): ~\pi_{\pred}(r) = \mathrm{true} \\
			\mathcal{R} \vDash \varphi_1 \vee \varphi_2 & ~~ \Leftrightarrow ~~ (\mathcal{R} \vDash \varphi_1) \vee (\mathcal{R} \vDash \varphi_2) \\
			\mathcal{R} \vDash \varphi_1 \wedge \varphi_2 & ~~ \Leftrightarrow ~~ (\mathcal{R} \vDash \varphi_1) \wedge (\mathcal{R} \vDash \varphi_2) \\
			\mathcal{R} \vDash \nextOp_a \, \varphi & ~~\Leftrightarrow ~~ \begin{cases}\mathcal{R}_a \vDash \varphi, & a ~ \text{\normalfont mod} ~ \Delta t = 0\\ \forall t \in \left \lfloor \frac{a}{\Delta t} \right \rfloor + [0, \Delta t]: ~ \mathcal{R}_t \vDash \varphi, & \text{\normalfont otherwise}\end{cases}
		\end{split}
	\end{equation*}
	using a predicate evaluation function $\pi_{\pred}$ and notation $\mathcal{R}_a(t) := \mathcal{R}(t+a)$.
\end{definition}
The paper \cite{Roehm2016b} that introduced RTL also provides an approach for converting a temporal logic formula in STL to RTL. Given an STL formula $\varphi$, the resulting RTL formula $\varphi_{\text{rtl}}$ has according to \cite[Thm.~1]{Roehm2016b} the following structure:
\begin{equation*}
	\varphi_{\text{rtl}} = \bigwedge_{h = 0}^H \bigvee_{j = 0}^J \nextOp_{j \, \Delta t/2} \bigvee_{k=0}^K \all \, \pred_{hjk},
\end{equation*}
where $H$, $J$, and $K$ denote the number of conjunctions and disjunctions. 
Since we consider the case where the atomic predicates $\pred_{hjk}$ are all given by linear inequality constraints, the entailment check $\mathcal{R} \vDash \varphi_{\text{rtl}}$ can be equivalently formulated in terms of intersection checks between the reachable set $\mathcal{R}(t)$ and polytopes $\mathcal{P}_{hjkl}$ in halfspace representation \cite[Sec.~5]{Roehm2016b}:
\begin{equation}
	\mathcal{R} \vDash \varphi_{\text{rtl}} ~~ \Leftrightarrow ~~\bigwedge_{h=0}^H \bigvee_{j=0}^J \bigvee_{k=0}^K \bigg( \widetilde{\mathcal{R}}(j \, \Delta t/2) \cap \bigcup_{l=0}^L \mathcal{P}_{hjkl} = \emptyset \bigg),
	\label{eq:RTLfinal}
\end{equation}
where $L$ is the number of polytopes required to represent the atomic predicates $\pred_{hjk}$ and the auxiliary variable $\widetilde{\mathcal{R}}(t)$ is used to distinguish between time point and time interval reachable sets:
\begin{equation*}
	\widetilde{\mathcal{R}}(t) = \begin{cases} \mathcal{R}(t_j), & t ~ \text{\normalfont mod} ~ \Delta t = 0  \\ \mathcal{R}(\tau_j), & \text{\normalfont otherwise} \end{cases} ~~~~\text{with} ~~~~ j = \lfloor t / \Delta t \rfloor.
\end{equation*}
Since we use the same time step size $\Delta t$ for reachability analysis and for the time-discretization of reachset temporal logic, \eqref{eq:RTLfinal} can be directly evaluated on a reach sequence \eqref{eq:reachSeq}.
An example demonstrating the conversion from STL to RTL is provided in \ref{sec:appendix}.

According to \cite{Roehm2016b}, it holds that if the reachable set satisfies the RTL formula $\varphi_{\text{rtl}}$, then it is guaranteed that all traces contained in the reachable set satisfy the corresponding STL formula $\varphi$. Due to the conservatism introduced by the time-discretization for RTL with $\Delta t$, the contrary does not hold, meaning that satisfaction of the STL formula does not automatically imply satisfaction of the RTL formula. However, since the conservatism introduced by the time-discretization converges to zero for $\Delta t \to 0$ \cite{Roehm2016b}, satisfaction of STL and RTL becomes equivalent in the limit case $\Delta t \to 0$.
%

\section{Problem Formulation}
\label{sec:problem}

We consider linear time-invariant systems
\begin{align}
	\dot{x}(t)	&= A \, x(t) + B \, u(t) \label{eq:linsys}
\end{align}
with $A \in \R^{n \times n}$, $B \in \R^{n \times m}$, where $x(t) \in \R^{n}$ is the state and $u(t) \in \R^{m}$ is the input.
The initial state $x(t_0)$ is uncertain within the initial set $\initSet \subset \R^{n}$ and the input $u(t)$ is uncertain within the input set $\mathcal{U} \subset \R^{m}$. In this work, we assume that $\initSet$ and $\mathcal{U}$ are represented by zonotopes.
While we consider the set of uncertain inputs $\mathcal{U}$ to be constant over time for simplicity, the extension to a time-varying set of uncertain inputs $\mathcal{U}(t)$ is straightforward. Time-varying sets of uncertain inputs can for example be used to pass external signals such as reference trajectories or a sequence of control commands to the model.

Given an STL formula $\varphi$, our goal is to decide whether the system \eqref{eq:linsys} satisfies $\varphi$ for all initial states $x_0 \in \initSet$ and all input signals $u(\cdot) \in \mathcal{U}$: 
\begin{equation*}
	\forall t, \forall x_0 \in \initSet, \forall u(\cdot) \in \mathcal{U}: ~ \xi(t,x_0,u(\cdot)) \vDash \varphi,
\end{equation*}
where $\xi(t,x_0,u(\cdot))$ is the solution to \eqref{eq:linsys} for the initial state $x_0 = x(t_0)$ and the input signal $u(\cdot)$. 

\section{Reachability Analysis}
\label{sec:reach}

For the approach presented in this paper we require a reachability algorithm that is dependency preserving according to Def.~\ref{def:dependency}. To construct such an algorithm, we modify the reachability algorithm in \cite[Sec.~3.2]{Althoff2010a} by using a different enclosure for the reachable set due to uncertain inputs $\mathcal{P}(\Delta t)$. In particular, to preserve dependencies on the inputs, we approximate the reachable set due to time-varying inputs with the reachable set due to constant inputs $\mathcal{P}_c(\Delta t) = T \, \mathcal{U}_0$ in each time step, where 
\begin{equation}
	T = A^{-1} (e^{A \Delta t} - I_n)
	\label{eq:propMat}
\end{equation}
is the propagation matrix for constant inputs and $\mathcal{U}_0 = B \big (\mathcal{U} \oplus (-c_u)\big) \subset \R^{n}$ is an auxiliary variable defined using the geometric center $c_u$ of the input set $\mathcal{U} = \langle c_u,G_u\rangle_Z$. If the matrix $A$ is not invertible, $A^{-1}$ in \eqref{eq:propMat} can be integrated into the power-series for the exponential matrix \cite[Sec.~IV]{Kochdumper2022}. Since $\mathcal{P}_c(\Delta t) = T \, \mathcal{U}_0$ is computed using a linear map, it preserves dependencies according to \cite[Tab.~1]{Kochdumper2020c}. We account for the approximation error with the bloating term
\begin{equation}
	\mathcal{D} = \bigg( \sum_{j=1}^\kappa \frac{A^j \Delta t^{j+1}}{(j+1)!} \bigg) \mathcal{U}_0 \oplus \bigoplus_{j=1}^\kappa \frac{A^j \Delta t^{j+1}}{(j+1)!}\, \mathcal{U}_0 \oplus 2 \,  \Delta t \, \mathbfcal{E} \, \mathcal{U}_0,
	\label{eq:inputDiff}
\end{equation}
which according to \cite[Prop.~2]{Wetzlinger2022} encloses the difference between constant and time-varying inputs. Consequently, we overall obtain the following enclosure for the reachable set due to time-varying inputs:
\begin{equation}
	\mathcal{P}(\Delta t) \subseteq T \, \mathcal{U}_0 \oplus \mathcal{D}.
	\label{eq:inputReach}
\end{equation}
Another essential part of the reachability algorithm in  \cite[Sec.~3.2]{Althoff2010a} is the set $\mathcal{C}_i$ that accounts for the curvature of trajectories and is according to \cite[Sec.~3.2]{Althoff2010a} given as 
\begin{equation} \label{eq:Cu}
	\mathcal{C}_i = (e^{A \Delta t})^{i} \big( \mathbfcal{F} \, \mathcal{X}_0 \oplus \mathbfcal{G} \, \widetilde{u} \big)
\end{equation}
using the auxiliary variable $\widetilde u = B c_u \in \R^{n}$ and the interval matrices \cite[Sec.~3.2]{Althoff2010a}
\begin{equation}
	\mathbfcal{F} = \mathbfcal{T}^{(\kappa)} A \oplus \mathbfcal{E}, ~ \mathbfcal{G} = \mathbfcal{T}^{(\kappa + 1)} \oplus \mathbfcal{E} \Delta t, ~ \mathbfcal{T}^{(o)} = \bigoplus_{j=2}^o \big[\big( j^{\frac{-j}{j-1}} - j^{\frac{-1}{j-1}} \big) \Delta t^j,0\big] \frac{A^{j-1}}{j!},
	\label{eq:curv}
\end{equation}
%
where the parameter $\kappa$ and the interval matrix $\mathbfcal{E}$ are the truncation order and the remainder of the Taylor series for the exponential matrix \cite[Eq.~(3.2)]{Althoff2010a}:
\begin{equation*} 
	\mathbfcal{E} = [-E,E] , \quad
	E = e^{|A|\Delta t} - \sum_{j=0}^{\kappa} \frac{\big( |A|\Delta t \big)^j}{j!} .
\end{equation*}
Using the enclosure of the reachable set due to time-varying inputs $\mathcal{P}(\Delta t)$ in \eqref{eq:inputReach} as well as the curvature enclosure $\mathcal{C}_i$ in \eqref{eq:Cu}, the reachability algorithm in \cite[Sec.~3.2]{Althoff2010a} finally computes tight enclosures of the time point reachable set $\mathcal{R}(t_i)$ and the time interval reachable set $\mathcal{R}(\tau_i)$ as follows: 
\begin{equation}
\begin{split}
	& \mathcal{R}(t_{i}) = \mathcal{H}(t_{i}) \oplus \mathcal{P}(t_{i}), \\
	& \mathcal{R}(\tau_{i}) = conv\big(\mathcal{H}(t_i), \mathcal{H}(t_{i+1})\big) \oplus \mathcal{P}(t_{i+1}) \oplus \mathcal{C}_i,
\end{split}
\label{eq:reach}
\end{equation}
where the homogeneous solution $\mathcal{H}(t_i)$ resulting from the propagation of the initial set and constant inputs as well as the particular solution $\mathcal{P}(t_i)$ due to uncertain time-varying inputs are computed using the following propagation scheme:
\begin{equation}
\begin{split}
	\mathcal{H}(t_{i+1}) &= e^{A \Delta t} \, \mathcal{H}(t_{i}) \oplus T \, \widetilde{u}, \\
	\mathcal{P}(t_{i+1}) &= e^{A \Delta t} \, \mathcal{P}(t_i) \oplus \mathcal{P}(\Delta t),
\end{split}
\label{eq:propScheme}
\end{equation} 
where the initial values are $\mathcal{H}(0) = \mathcal{X}_0$ and $\mathcal{P}(0) = \mathbf{0}$. The resulting reachability algorithm is summarized in Alg.~\ref{alg:reach}, where we enclose $\mathcal{C}_i$ and $\mathcal{D}$ by intervals to reduce the number of generators of the reachable set.

\begin{algorithm}[!tb]
	\caption{Reachability Analysis} \label{alg:reach}
	\vspace{2pt}
	{\raggedright \textbf{Input:} Linear system $\dot x = A x + B u$, initial set $\initSet = \langle c_x,G_x\rangle_Z$, input set $\mathcal{U}= \langle c_u,G_u \rangle_Z$, final time $t_{\text{end}}$, time step size $\Delta t$, truncation order $\kappa$.
	
	\textbf{Output:} Reach sequence $\mathcal{R}(t_0),\mathcal{R}(\tau_0),\mathcal{R}(t_1),\mathcal{R}(\tau_1),\dots,\mathcal{R}(\tau_{t_{\text{end}}/\Delta t-1}),\mathcal{R}(t_{\text{end}})$ with time point reachable sets $\mathcal{R}(t_i)$ and time interval reachable sets $\mathcal{R}(\tau_i)$.
	}
	\begin{algorithmic}[1]
		\State $\widetilde{u} \gets B \, c_u$, $\mathcal{U}_0 \gets B \big( \mathcal{U} \oplus (-c_u) \big) $, $T \gets \eqref{eq:propMat}$, $\mathbfcal{F},\mathbfcal{G} \gets \eqref{eq:curv}$, $\mathcal{D} \gets \eqref{eq:inputDiff}$
		\State $t_0 \gets 0$, $\mathcal{H}(t_0),\mathcal{R}(t_0) \gets \initSet$, $\mathcal{P}(t_0),\mathcal{P}_c(t_0),\mathcal{D}_0 \gets \mathbf{0}$
		\State $\mathcal{C}_0 \gets \texttt{interval}(\mathbfcal{F} \, \mathcal{X}_0 \oplus \mathbfcal{G} \, \widetilde{u})$
		\For{$i \gets 0$ to $t_{\text{end}}/\Delta t -1$}
			\State $t_{i+1} \gets t_i + \Delta t$, $\tau_i \gets [t_i,t_{i+1}]$, $\mathcal{C}_{i+1} \gets e^{A \Delta t} \mathcal{C}_i$  
			\State $\mathcal{H}(t_{i+1}) \gets e^{A \Delta t} \, \mathcal{H}(t_{i}) \oplus T \, \widetilde{u}$, $\mathcal{D}_{i+1} \gets \texttt{interval}(e^{A \Delta t} \, \mathcal{D}_i \oplus \mathcal{D})$
			\State $\mathcal{P}_c(t_{i+1}) \gets e^{A \Delta t} \, \mathcal{P}_c(t_i) \oplus T \, \mathcal{U}_0$, $\mathcal{P}(t_{i+1}) \gets \mathcal{P}_c(t_{i+1}) \oplus \mathcal{D}_{i+1}$
			\State $\mathcal{R}(t_{i+1}) \gets \mathcal{H}(t_{i+1}) \oplus \mathcal{P}(t_{i+1})$
			\State $\mathcal{R}(\tau_{i}) \gets conv\big(\mathcal{H}(t_i), \mathcal{H}(t_{i+1})\big) \oplus \mathcal{P}(t_{i+1}) \oplus \mathcal{C}_i$
		\EndFor
	\end{algorithmic}
\end{algorithm}


An important property of Alg.~\ref{alg:reach} is that it is dependency preserving for piecewise constant inputs. Please note that it is sufficient to consider piecewise constant inputs in our case since those approximate time-varying inputs arbitrary well for $\Delta t \to 0$. We therefore now show how to construct the approximate solution $\mu(t,x_0,u(\cdot))$ to the differential equation as well as the error bound $\mathcal{E}(t)$ in Def.~\ref{def:dependency} from the computed reachable set:

\begin{proposition}
	We consider the initial set $\initSet = \langle c_x,G_x \rangle_Z \subset \Rn$ with $\gens_x$ generators, the input set $\mathcal{U} = \langle c_u,G_u \rangle_Z \subset \R^m$ with $\gens_u$ generators, and the corresponding enclosure of the reachable set $\mathcal{R}(t)$ computed with Alg.~\ref{alg:reach} using time step size $\Delta t$. Given an initial state $x_0 \in \initSet$ as well as the piecewise constant input signal $u(\cdot) \in \mathcal{U}$ defined as
	\begin{equation}
		x_0 = c_x + G_x \alpha_x,\quad \forall t \in [(j-1) \Delta t,j \Delta t]:~ u(t) = c_u + G_u \alpha_{u,j}
		\label{eq:alpha}
	\end{equation}
	with $j = 1,\dots,t_{\text{\normalfont end}}/\Delta t$, the approximate solution $\mu(t,x_0,u(\cdot))$ and the time-varying error bound $\mathcal{E}(t)$ required for Alg.~\ref{alg:reach} to be dependency preserving according to Def.~\ref{def:dependency} are given as
	\begin{equation}
	\begin{split}
		\mu(t,x_0,u(\cdot)) &= \begin{cases} [G_{i1(\cdot,\mathbf{H}_{i})}~\mathbf{0}] \, \widetilde{\alpha}, & t~\text{\normalfont mod}~\Delta t = 0 \\ [G_{i2(\cdot,\mathbf{N}_{i})}~\mathbf{0}] \, \widetilde{\alpha}, & \text{\normalfont otherwise} \end{cases} \\
		\mathcal{E}(t) &= \begin{cases} \langle c_{i1},G_{i1(\cdot,\mathbf{K}_{i})}\rangle_Z, & t~\text{\normalfont mod}~\Delta t = 0 \\ \langle c_{i2},G_{i2(\cdot,\mathbf{M}_{i})}\rangle_Z, & \text{\normalfont otherwise}\end{cases}
	\end{split}
	\label{eq:dependency}
	\end{equation}
	with 
	\begin{equation*}
		i = \lfloor t/\Delta t  \rfloor, ~~~~\mathcal{R}(t_i) = \langle c_{i1},G_{i1} \rangle_Z, ~~~~\mathcal{R}(\tau_i) = \langle c_{i2},G_{i2} \rangle_Z,
	\end{equation*}
	where the vector 
	\begin{equation}
		\widetilde{\alpha} = [\alpha_x \concat \alpha_{u,1} \concat \dots \concat \alpha_{u,{t_{\text{\normalfont end}}/\Delta t}}]
		\label{eq:alphaVec}
	\end{equation} 
	stores the factor values that define the initial state and input signal in \eqref{eq:alpha} and the tuples
	\begin{equation}
	\begin{split}
		& \mathbf{H}_{i} = (1,\dots,\gamma_x,\gamma_x + 1,\dots,\gamma_x + \gamma_u \, i ), ~ \mathbf{K}_{i} = (1,\dots,\gamma_{i1}) \setminus \mathbf{H}_{i} \\
		\mathbf{N}_{i} = &( 1, \dots,\gamma_x,2 \, \gamma_x + 2,\dots,2 \, \gamma_x + 2 + \gamma_u \, (i+1) ), ~ \mathbf{M}_{i} = (1,\dots,\gamma_{i2}) \setminus \mathbf{N}_{i}  \\
	\end{split}
	\label{eq:indices}
	\end{equation}
	store the indices of specific zonotope generators. 
	\label{prop:dependency}
\end{proposition}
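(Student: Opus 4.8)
The strategy is to unroll the propagation scheme in Alg.~\ref{alg:reach} symbolically and track, generator by generator, which columns of the zonotope generator matrices $G_{i1}$ and $G_{i2}$ carry a linear dependency on the factors $\alpha_x, \alpha_{u,1}, \dots, \alpha_{u,i}$ (or $\alpha_{u,i+1}$ for the time-interval set) and which columns come from the "over-approximation residue" (the bloating terms $\mathcal{D}_i$, the curvature set $\mathcal{C}_i$, and the Hausdorff part of the convex-hull enclosure \eqref{eq:zonoConvHull}). The claim is then just the assertion that selecting the dependency-carrying columns via the index tuples $\mathbf{H}_i, \mathbf{N}_i$ yields $\mu$, selecting the complementary columns $\mathbf{K}_i, \mathbf{M}_i$ (together with the center) yields $\mathcal{E}(t)$, and that the two together reconstruct exactly $\mathcal{R}(t_i)$ resp. $\mathcal{R}(\tau_i)$, so that $\xi(t,x_0,u(\cdot)) \in \mu(t,x_0,u(\cdot)) \oplus \mathcal{E}(t) \subseteq \mathcal{R}(t)$ holds with equality in the middle inclusion.

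First I would set up an induction on the step index $i$. For the homogeneous part $\mathcal{H}(t_i)$: since $\mathcal{H}(0) = \mathcal{X}_0 = \langle c_x, G_x\rangle_Z$ and $\mathcal{H}(t_{i+1}) = e^{A\Delta t}\mathcal{H}(t_i) \oplus T\widetilde u$, a trivial induction (using \eqref{eq:zonoLinTrans} and \eqref{eq:zonoMinSum}) shows $\mathcal{H}(t_i) = \langle \,\cdot\,, (e^{A\Delta t})^i G_x\rangle_Z$ with center depending only on $c_x, \widetilde u$; crucially the generator matrix of $\mathcal{H}(t_i)$ is exactly $(e^{A\Delta t})^i G_x$, so its first $\gamma_x$ columns depend linearly on $\alpha_x$ and nothing else is added. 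For the particular part, $\mathcal{P}_c(t_{i+1}) = e^{A\Delta t}\mathcal{P}_c(t_i) \oplus T\mathcal{U}_0$ with $\mathcal{P}_c(0)=\mathbf 0$ unrolls to $\bigoplus_{\ell=1}^{i}(e^{A\Delta t})^{i-\ell} T\, B\langle \mathbf 0, G_u\rangle_Z$, contributing, for each step $\ell$, a block of $\gamma_u$ columns that depends linearly on $\alpha_{u,\ell}$; here one must check the piecewise-constant input model \eqref{eq:alpha} is exactly what makes $T\mathcal{U}_0$ the correct constant-input contribution of interval $[(\ell-1)\Delta t, \ell\Delta t]$, and that the center offsets $T\widetilde u$ account for the $c_u$ part. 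The remaining terms $\mathcal{D}_i$ (from \eqref{eq:inputReach}, enclosed by an interval) and $\mathcal{C}_i$ (also interval-enclosed) contribute only axis-aligned generators with no factor dependency — these are the columns indexed by the set-difference tuples. Combining via \eqref{eq:reach}: $\mathcal{R}(t_i) = \mathcal{H}(t_i) \oplus \mathcal{P}(t_i)$ has generator matrix $G_{i1} = [\,(e^{A\Delta t})^i G_x ~~ (\text{the } \gamma_u i \text{ input-dependent columns}) ~~ (\text{the } \mathcal{D}_i \text{ interval columns})\,]$, which is exactly the ordering encoded by $\mathbf{H}_i = (1,\dots,\gamma_x,\gamma_x+1,\dots,\gamma_x+\gamma_u i)$ and $\mathbf{K}_i$ its complement up to $\gamma_{i1}$. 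Therefore $\mu(t_i,x_0,u(\cdot)) = [G_{i1(\cdot,\mathbf{H}_i)}~\mathbf 0]\,\widetilde\alpha$ evaluates the dependency-carrying columns at the factor values of the chosen trajectory, and $\mathcal{E}(t_i) = \langle c_{i1}, G_{i1(\cdot,\mathbf{K}_i)}\rangle_Z$ is precisely the Minkowski complement, so their Minkowski sum is $\mathcal{R}(t_i)$; by Def.~\ref{def:reachSet} and the fact that this unrolling reproduces the exact solution of the linear ODE for the piecewise-constant input up to the enclosed terms, the point $\xi(t_i,x_0,u(\cdot))$ lies in $\mu\oplus\mathcal E$.

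For the time-interval set $\mathcal{R}(\tau_i) = conv(\mathcal{H}(t_i),\mathcal{H}(t_{i+1})) \oplus \mathcal{P}(t_{i+1}) \oplus \mathcal{C}_i$ the argument is the same but one extra bookkeeping step is needed: the convex-hull enclosure \eqref{eq:zonoConvHull} applied to two zonotopes that have the same number of generators $\gamma_x$ (namely $(e^{A\Delta t})^i G_x$ and $(e^{A\Delta t})^{i+1} G_x$) produces, by that formula, a center plus generator blocks $0.5(G_1+G_2^{(1)})$, $0.5(G_1-G_2^{(1)})$, $0.5(c_1-c_2)$ — so $2\gamma_x + 1$ columns, of which the first $2\gamma_x$ still depend linearly on $\alpha_x$ (they are linear images of $G_x$) while the single column $0.5(c_1-c_2)$ and everything after it ($\mathcal{P}(t_{i+1})$'s $\gamma_u(i+1)$ input-dependent columns, then $\mathcal{C}_i$'s interval columns) follow. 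Reading off \eqref{eq:indices}, $\mathbf{N}_i$ keeps exactly the columns $1,\dots,\gamma_x$ (the $\alpha_x$-dependent part we expose in $\mu$) and the input block starting at index $2\gamma_x+2$ — note the "$+2$", not "$+1$": this is the subtlety, because the convex-hull block contributes $2\gamma_x$ hull generators plus the one $0.5(c_1-c_2)$ column, i.e. $2\gamma_x+1$ columns, so the input block begins at position $2\gamma_x+2$; one must verify carefully that the column $0.5(c_1-c_2)$ is assigned to $\mathbf{M}_i$ (the error part) rather than $\mathbf{N}_i$, and that the second group of hull generators $0.5(G_1-G_2^{(1)})$ is also assigned to $\mathbf{M}_i$ — this is a modelling choice consistent with Def.~\ref{def:dependency} (it is legitimate to place an $\alpha_x$-dependent generator into $\mathcal{E}(t)$; that only makes $\mathcal E$ larger, and $\mu\oplus\mathcal E$ still equals $\mathcal R(\tau_i)$). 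Then the same three-line conclusion as above finishes the interval case.

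The main obstacle is not any single hard estimate but the careful index accounting: one has to pin down the precise column order in which Alg.~\ref{alg:reach} assembles $G_{i1}$ and $G_{i2}$ through the Minkowski sums and the convex hull \eqref{eq:zonoConvHull}, so that the tuples $\mathbf{H}_i,\mathbf{K}_i,\mathbf{N}_i,\mathbf{M}_i$ in \eqref{eq:indices} really do pick out the dependency-carrying columns and their complements; the off-by-one in $\mathbf{N}_i$ caused by the extra $0.5(c_1-c_2)$ hull column is exactly where an error would creep in. A secondary point requiring care is the non-invertible-$A$ case (the remark after \eqref{eq:propMat}): one should note that folding $A^{-1}$ into the exponential power series changes $T$ but leaves the dependency structure — $\mathcal{P}_c$ is still a linear map of $\mathcal{U}_0$ — untouched, so the proposition holds verbatim. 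Finally, one must confirm the non-triviality clause of Def.~\ref{def:dependency}: whenever $\mathcal{R}(t)$ is not a singleton, the $\mu$ defined above is genuinely non-zero (its generator block $(e^{A\Delta t})^i G_x$ is nonzero as soon as $G_x\neq\mathbf 0$, and if $G_x=\mathbf 0$ but the set is non-singleton then $\gamma_u\geq 1$ and the input block is nonzero), so the trivial fallback is never invoked.
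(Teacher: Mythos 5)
Your proof is correct and follows essentially the same route as the paper: represent $x_0$ and $u(\cdot)$ by zonotope factors, observe that Alg.~\ref{alg:reach} is built only from linear maps, Minkowski sums, and the convex-hull enclosure \eqref{eq:zonoConvHull}, and then separate the generator columns that carry a linear dependency on $\widetilde{\alpha}$ from those that only encode over-approximation. The only difference is one of presentation: the paper delegates the dependency-preservation of the constituent operations to a citation (\cite[Tab.~1, Lemma~2]{Kochdumper2020c}), whereas you unroll the recursion and do the column bookkeeping -- including the $2\gamma_x+1$ columns produced by the hull and the resulting offset in $\mathbf{N}_i$ -- explicitly, which makes the argument self-contained but is not a different proof.
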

\begin{proof}
	Instead of directly mapping initial states and inputs to reachable states, our approximate solution $\mu(t,x_0,u(\cdot))$ relies on an intermediate representation by zonotope factors $\widetilde{\alpha}$. We therefore first represent the initial state $x_0$ as well as the input signal $u(\cdot)$ by the corresponding zonotope factors $\widetilde{\alpha}$ using the zonotopes $\mathcal{X}_0$ and $\mathcal{U}$, and then map these zonotope factors to reachable states using the zonotopes that represent the reachable set $\mathcal{R}(t)$. In particular, according to Def.~\ref{def:zonotope}, every point $x \in \mathcal{Z}$ inside a zonotope $\mathcal{Z} = \langle c,G \rangle_Z \subset \Rn$ can equivalently be represented by the corresponding zonotope factors $\alpha \in \R^\gamma$ using the relation $x = c + G\, \alpha$. This allows us to equivalently represent the initial state $x_0 \in \mathcal{X}_0$ and the piecewise constant input signal $u(\cdot) \in \mathcal{U}$ by the vector of zonotope factors $\widetilde{\alpha}$ in \eqref{eq:alphaVec} according to \eqref{eq:alpha}. 
Since Alg.~\ref{alg:reach} is composed of the set operations linear map, Minkowski sum, and convex hull, which are all dependency preserving for zonotopes according to \cite[Tab.~1]{Kochdumper2020c}, it holds that Alg.~\ref{alg:reach} is dependency preserving, too \cite[Lemma~2]{Kochdumper2020c}. This means that we can directly insert $\widetilde{\alpha}$ into the zonotopes that represent the reachable set $\mathcal{R}(t)$ in \eqref{eq:dependency} to map from factors to reachable states, where we use the time point reachable set $\mathcal{R}(t_i)$ if the time $t$ is equivalent to $t_i$ and the time interval reachable set $\mathcal{R}(\tau_i)$ if the time $t$ is between two time points. When mapping from factors to reachable states we have to distinguish between generators that correspond to initial states and inputs, and generators that represent uncertainty arising from the difference between constant and time-varying inputs $\mathcal{D}$, the curvature enclosure $\mathcal{C}_i$, or the over-approximation from the enclosure of the convex hull of two zonotopes computed using \eqref{eq:zonoConvHull}. We achieve this with the tuples $\mathbf{H}_i,\mathbf{N}_i,\mathbf{K}_i,\mathbf{M}_i$ in \eqref{eq:indices}, where $\mathbf{H}_i,\mathbf{N}_i$ store the indices of the generators that correspond to initial states as well as inputs and the tuples $\mathbf{K}_i,\mathbf{M}_i$ store the indices of the generators that represent uncertainty.
\end{proof}

In addition to dependency preservation, another important property of Alg.~\ref{alg:reach} is that the computed enclosure of the reachable set converges to the exact reachable set for $\Delta t \to 0$. The corresponding proof is provided in \ref{sec:appendixB}. As a direct consequence, also the size of the error $\mathcal{E}(t)$ in Prop.~\ref{prop:dependency} converges to 0 for $\Delta t \to 0$, so that the approximate solution $\mu(t,x_0,u(\cdot))$ converges to the exact solution $\xi(t,x_0,u(\cdot))$. While a large over-approximation error for reachability analysis might prevent the successful verification of the temporal logic specification, improving the accuracy to achieve a smaller error increases the computation time. To resolve this trade-off, our strategy for automated verification is to start with a quite inaccurate reachable set enclosure that can be computed very efficiently, and then iteratively refine the accuracy until the temporal logic specification can be either verified or falsified.

\section{Model Checking}
\label{sec:modelChecking}

The main disadvantage of model checking using RTL is that the intersection checks in \eqref{eq:RTLfinal} are evaluated in a pure true/false manner, losing all information about which initial states or inputs result in an intersection. As visualized in Fig.~\ref{fig:example}, the model checking procedure therefore also considers so-called spurious traces, which are contained in the reachable set but are not consistent with the system dynamics. In this work we solve this problem by explicitly keeping track which initial states and which inputs result in an intersection.

In particular, we approximate the set of all possible time-varying inputs $u(\cdot) \in \mathcal{U}$ with the set of all possible piecewise constant inputs, because our reachability algorithm Alg.~\ref{alg:reach} is dependency preserving for piecewise constant inputs. This approximation does not negatively affect the completeness of our model checking algorithm, since in the limit case $\Delta t \to 0$ piecewise constant inputs approximate time-varying inputs arbitrary well. For our model checking algorithm, we have to consider the set containing all possible combinations of initial states and piecewise constant inputs, which is given as $\mathcal{X}_0 \times \mathcal{U} \times \dots \times \mathcal{U}$. As shown in \eqref{eq:alpha}, since both the initial set and the input set are represented by zonotopes, we can equivalently represent every initial state and every piecewise constant input signal by the corresponding zonotope factors $\alpha_x$ and $\alpha_{u,i}$. Therefore, the set $\mathcal{X}_0 \times \mathcal{U} \times \dots \times \mathcal{U}$ containing all possible combinations of initial states and inputs can equivalently be represented in the space of zonotope factors by the hypercube $\widetilde \alpha = [-\mathbf{1},\mathbf{1}]$ with $\widetilde \alpha$ defined as in \eqref{eq:alphaVec}. Since the hypercube $\widetilde \alpha = [-\mathbf{1},\mathbf{1}]$ is computationally much easier to handle than the zonotope $\mathcal{X}_0 \times \mathcal{U} \times \dots \times \mathcal{U}$, our model checking algorithm represents initial states and inputs in factor space $\widetilde \alpha = [-\mathbf{1},\mathbf{1}]$.

 
As visualized in Fig.~\ref{fig:temporalLogic}, for each intersection between a reachable set $\widetilde{\mathcal{R}}(j \, \Delta t/2)$ and a polytope $\mathcal{P}_{hjkl}$ we can divide the domain $\widetilde{\alpha} \in [-\mathbf{1},\mathbf{1}]$ into factors that potentially lead to an intersection and factors that are guaranteed to not result in an intersection with the polytope $\mathcal{P}_{hjkl}$. The corresponding polytopes $\fac$ representing factors that potentially result in a violation can be computed as follows:

\begin{figure}[!tb] 
	\centering
	\includegraphics[width = 0.9\textwidth]{./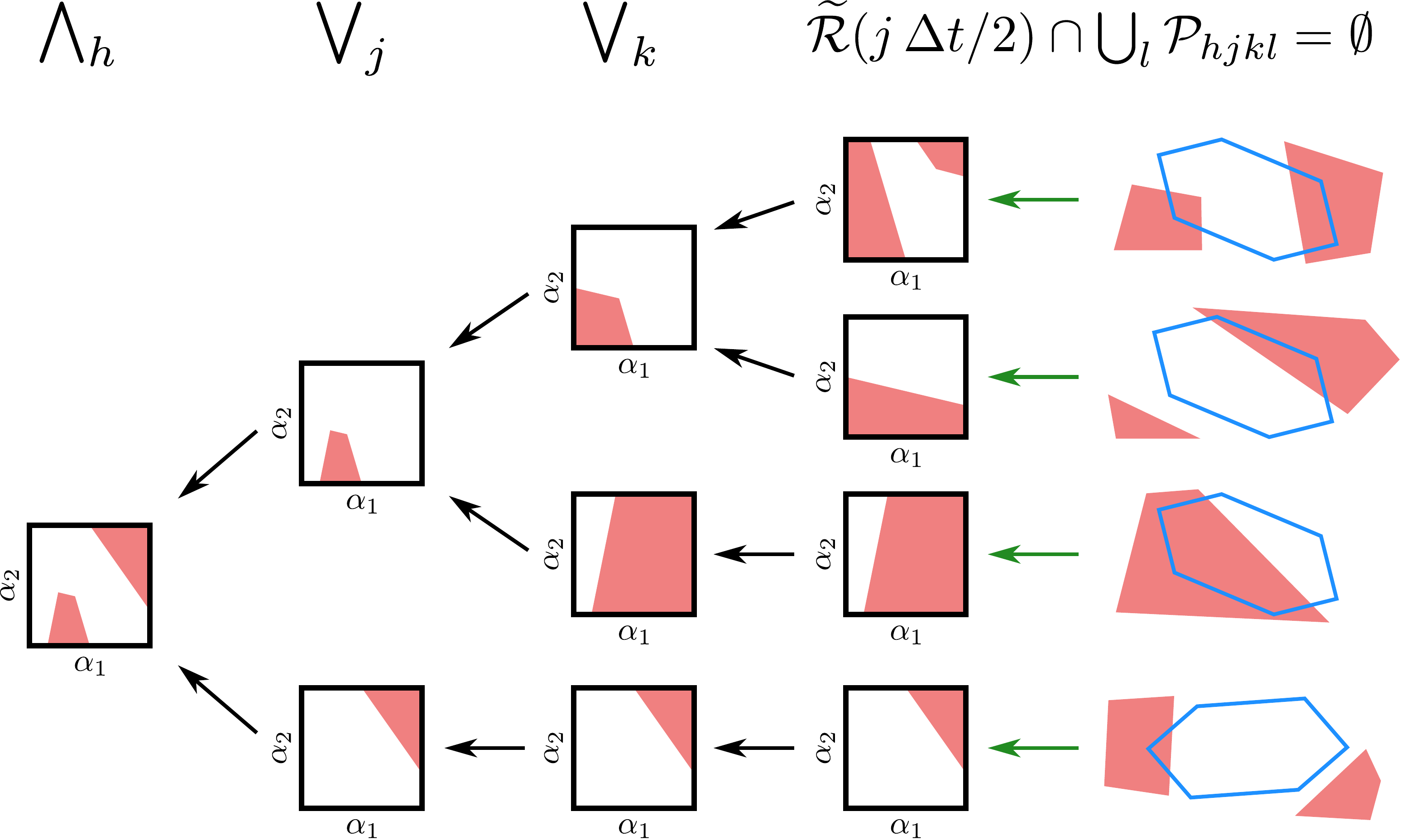}
	\caption{Schematic visualization of the model checking approach presented in Sec.~\ref{sec:modelChecking}, where the reachable sets $\widetilde{\mathcal{R}}(j \, \Delta t/2)$ are visualized in blue and the polytopes $\mathcal{P}_{hjkl}$ as well as the corresponding sets of unsafe factors $\fac$ are depicted in red. Moreover, the green arrows represent the mapping from the state space to the space of zonotope factors $\widetilde{\alpha} \in [-\mathbf{1},\mathbf{1}]$.}
	\label{fig:temporalLogic}
\end{figure}

\begin{algorithm}[!tb]
	\caption{Model Checking} \label{alg:modelCheck}
	\vspace{2pt}
	{\raggedright \textbf{Input:} Linear system $\dot x = A x + B u$, STL specification $\varphi$, initial set $\initSet$, input set $\mathcal{U}$, final time $t_{\text{end}}$, time step size $\Delta t$, truncation order $\kappa$.
	
	\textbf{Output:} List $\mathbf{L}$ containing polytopes that represent zonotope factors of the initial and input set which violate the specification.
	}
	\begin{algorithmic}[1]
		\State $\mathcal{R}(t_0),\mathcal{R}(\tau_0),\dots,\mathcal{R}(t_{\text{end}}) \gets $ comp. reach. set with $\Delta t$ and $\kappa$ using Alg.~\ref{alg:reach}	
		\State $\bigwedge_h \bigvee_j \bigvee_k (\widetilde{\mathcal{R}}(j \,\Delta t/2) \cap \bigcup_l \mathcal{P}_{hjkl} = \emptyset) \gets$ convert $\varphi$ to RTL using $\Delta t$ 
		\Statex[12] ~(see \eqref{eq:RTLfinal} and \cite{Roehm2016b})
		\State $\mathbf{L} \gets \emptyset$
		\For{$h \gets 0$ to $H$}
			\State $\mathbf{V} \gets [-\mathbf{1},\mathbf{1}]$
			\For{$j \gets 0$ to $J$ and $k \gets 0$ to $K$}
					\State $\mathbf{W} \gets \emptyset$
					\For{$l \gets 0$ to $L$}
						\If{$\widetilde{\mathcal{R}}(j \,\Delta t/2) \cap \mathcal{P}_{hjkl} \neq \emptyset$}
							\State $\fac \gets $ unsafe factors from $ \widetilde{\mathcal{R}}(j \,\Delta t/2) \cap \mathcal{P}_{hjkl}$ using Prop.~\ref{prop:constraints}
							\State $\mathbf{W} \gets (\mathbf{W},\fac)$
						\EndIf
					\EndFor
					\State $\mathbf{V} \gets (\mathbf{V}_{(1)} \cap \mathbf{W}_{(1)}, \dots, \mathbf{V}_{(|\mathbf{V}|)} \cap \mathbf{W}_{(|\mathbf{W}|)})$
					\State $\mathbf{V} \gets $ remove empty polytopes from $\mathbf{V}$ \label{line:empty}
			\EndFor
			\State $\mathbf{L} \gets (\mathbf{L}, \mathbf{V})$
		\EndFor	
	\end{algorithmic}
\end{algorithm}

\begin{proposition}
	Given the time point reachable set $\mathcal{R}(t_i) = \langle c_{i1},G_{i1}\rangle_Z$ and time interval reachable set $\mathcal{R}(\tau_i) = \langle c_{i2},G_{i2} \rangle_Z \subset \mathbb{R}^n$ computed using Alg.~\ref{alg:reach} as well as the polytope $\mathcal{P} = \langle C,d \rangle_P \subseteq \mathbb{R}^n$, it holds that for all zonotope factors $\widetilde \alpha$ that are not located in the polytopes 
	\begin{align*}
		\fac_{i1} & = \bigg \langle \big[C G_{i1(\cdot,\mathbf{H}_i)}~\mathbf{0} \big],~d - C c_{i1} + \sum_{j \in \mathbf{K}_i} | C G_{i1(\cdot,j)} | \bigg \rangle_P \\
		\fac_{i2} & = \bigg \langle \big[C G_{i2(\cdot,\mathbf{N}_i)}~\mathbf{0} \big],~d - C c_{i2} + \sum_{j \in \mathbf{M}_i} | C G_{i2(\cdot,j)} | \bigg \rangle_P
	\end{align*}
	there is no intersection between the corresponding reachable states and $\mathcal{P}$:
	\begin{equation*}
	\begin{split}
		\forall \widetilde{\alpha} \in [-\mathbf{1},\mathbf{1}]: ~~ &\big( \widetilde{\alpha} \not \in \fac_{i1} \big) \Rightarrow \big( \xi(t_i,x_0,u(\cdot)) \not \in \mathcal{P} \big) \\
		\forall \widetilde{\alpha} \in [-\mathbf{1},\mathbf{1}],\, \forall t \in \tau_i: ~~ &\big( \widetilde{\alpha} \not \in \fac_{i2} \big) \Rightarrow \big( \xi(t,x_0,u(\cdot)) \not \in \mathcal{P} \big),
	\end{split} 
	\end{equation*}
	where $\widetilde{\alpha} = [\alpha_x \concat \alpha_{u,1} \concat \dots \concat \alpha_{u,t_{\text{\normalfont end}}/\Delta t}]$, $x_0$ and $u(\cdot)$ are defined as in \eqref{eq:alpha}, and the tuples $\mathbf{H}_i$,$\mathbf{K}_i$,$\mathbf{N}_i$ and $\mathbf{M}_i$ are defined as in \eqref{eq:indices}.
	\label{prop:constraints}
\end{proposition}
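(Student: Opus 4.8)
The plan is to obtain the statement directly from the dependency preservation property of Prop.~\ref{prop:dependency}, combined with an elementary interval bound on linear images of the error zonotope. I would treat the time-point case in detail; the time-interval case is completely analogous.

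First I would invoke Prop.~\ref{prop:dependency} for the chosen $x_0$ and the piecewise constant $u(\cdot)$ encoded by $\widetilde\alpha$ as in \eqref{eq:alpha}--\eqref{eq:alphaVec}: since Alg.~\ref{alg:reach} is dependency preserving, $\xi(t_i,x_0,u(\cdot)) \in \mu(t_i,x_0,u(\cdot)) \oplus \mathcal{E}(t_i)$ with $\mu(t_i,x_0,u(\cdot)) = [G_{i1(\cdot,\mathbf{H}_i)}~\mathbf{0}]\,\widetilde\alpha$ and $\mathcal{E}(t_i) = \langle c_{i1},G_{i1(\cdot,\mathbf{K}_i)}\rangle_Z$. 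By Def.~\ref{def:zonotope} this means there is a $\beta \in [-\mathbf{1},\mathbf{1}]$ of length $|\mathbf{K}_i|$ with $\xi(t_i,x_0,u(\cdot)) = [G_{i1(\cdot,\mathbf{H}_i)}~\mathbf{0}]\,\widetilde\alpha + c_{i1} + G_{i1(\cdot,\mathbf{K}_i)}\,\beta$. The zero block here absorbs precisely the coordinates of $\widetilde\alpha$ — the factors of the inputs acting after $t_i$ — that do not influence $\xi(t_i,\cdot)$, which is why the constraint matrix of $\fac_{i1}$ carries the same padding.

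Next I would prove the contrapositive. Assume $\xi(t_i,x_0,u(\cdot)) \in \mathcal{P}$, i.e.\ $C\,\xi(t_i,x_0,u(\cdot)) \leq d$ by Def.~\ref{def:polytope}. Substituting the expression above and rearranging gives $[C G_{i1(\cdot,\mathbf{H}_i)}~\mathbf{0}]\,\widetilde\alpha \leq d - C c_{i1} - C G_{i1(\cdot,\mathbf{K}_i)}\,\beta$. For every row index $r$ and every $j \in \mathbf{K}_i$ we have $-(C G_{i1(\cdot,j)})_{(r)}\,\beta_j \leq |(C G_{i1(\cdot,j)})_{(r)}|$ because $\beta_j \in [-1,1]$; summing over $j \in \mathbf{K}_i$ yields the element-wise bound $-C G_{i1(\cdot,\mathbf{K}_i)}\,\beta \leq \sum_{j \in \mathbf{K}_i} |C G_{i1(\cdot,j)}|$. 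Chaining the two inequalities gives $[C G_{i1(\cdot,\mathbf{H}_i)}~\mathbf{0}]\,\widetilde\alpha \leq d - C c_{i1} + \sum_{j \in \mathbf{K}_i} |C G_{i1(\cdot,j)}|$, which is exactly the defining inequality of $\fac_{i1}$; hence $\widetilde\alpha \in \fac_{i1}$. Contraposition then yields the first implication. For the time-interval claim I would repeat the argument for an arbitrary $t \in \tau_i$ using the second branch of Prop.~\ref{prop:dependency}, namely $\xi(t,x_0,u(\cdot)) \in [G_{i2(\cdot,\mathbf{N}_i)}~\mathbf{0}]\,\widetilde\alpha \oplus \langle c_{i2},G_{i2(\cdot,\mathbf{M}_i)}\rangle_Z$; the point that makes this uniform in $t$ is that, since $\mathcal{R}(\tau_i)$ encloses the entire trajectory segment, the error zonotope $\mathcal{E}(t)$ (hence the generator block indexed by $\mathbf{M}_i$) is the same for all $t \in \tau_i$. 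The remaining steps are verbatim, giving $\widetilde\alpha \in \fac_{i2}$ whenever $\xi(t,x_0,u(\cdot)) \in \mathcal{P}$ for some $t \in \tau_i$.

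I do not expect a real obstacle: the core is just the support-function identity $\max_{\beta \in [-\mathbf{1},\mathbf{1}]}(-v\,G\,\beta) = \sum_j |v\,G_{(\cdot,j)}|$ applied row-wise to $C$. The only delicate part is the index bookkeeping — ensuring that the split of the generator columns of $\mathcal{R}(t_i)$ (resp.\ $\mathcal{R}(\tau_i)$) into the ``dependent'' set $\mathbf{H}_i$ (resp.\ $\mathbf{N}_i$) and the ``uncertainty'' set $\mathbf{K}_i$ (resp.\ $\mathbf{M}_i$) is exactly the one fixed in \eqref{eq:indices}, and that the zero padding of the constraint matrix aligns with the extra input-factor coordinates of $\widetilde\alpha$ — but this is inherited directly from Prop.~\ref{prop:dependency}.
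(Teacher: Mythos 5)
Your proof is correct and follows essentially the same route as the paper: invoke Prop.~\ref{prop:dependency} to write $\xi(t_i,x_0,u(\cdot))$ as the dependent part $[G_{i1(\cdot,\mathbf{H}_i)}~\mathbf{0}]\,\widetilde\alpha$ plus a point of the error zonotope, apply $C(\cdot)\leq d$, and move the uncertainty term to the right-hand side via the worst-case bound $\sum_{j\in\mathbf{K}_i}|CG_{i1(\cdot,j)}|$, which is exactly the paper's ``interval enclosure'' step made explicit. Your contrapositive formulation and the explicit support-function identity are just a more detailed spelling-out of the same argument.
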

\begin{proof}
	According to Prop.~\ref{prop:dependency} and Def.~\ref{def:dependency} it holds that 
	\begin{equation}
		\xi(t_i,x_0,u(\cdot)) \in \underbrace{[G_{i1(\cdot,\mathbf{H}_{i})}~\mathbf{0}] \, \widetilde{\alpha}}_{\mu(t,x_0,u(\cdot))}  \oplus \underbrace{\langle c_{i1},G_{i1(\cdot,\mathbf{K}_{i})}\rangle_Z}_{\mathcal{E}(t)}.
		\label{eq:proof1}
	\end{equation}
	Moreover, a reachable state $\xi(t_i,x_0,u(\cdot))$ is according to Def.~\ref{def:polytope} contained in the polytope $\mathcal{P} = \langle C,d\rangle_P$ if $C \, \xi(t_i,x_0,u(\cdot)) \leq d$ holds. Combining this with \eqref{eq:proof1} yields the condition
	\begin{equation*}
		\exists x \in \langle C c_{i1}, C G_{i1(\cdot,\mathbf{K}_i)} \rangle_Z:~ [C G_{i1(\cdot,\mathbf{H}_i)} ~ \mathbf{0}] \, \widetilde \alpha + x \leq d.
	\end{equation*}
	Finally, we bring the zonotope $\langle C c_{i1}, C G_{i1(\cdot,\mathbf{K}_i)} \rangle_Z$ that represents uncertainty in the computed enclosure of the reachable set to the other side of the inequality using the formula for the interval enclosure of a zonotope \cite[Prop. 2.2]{Althoff2010a}, which yields the polytope $\fac_{i1}$. In a similar way we obtain the result for the time interval reachable set.
\end{proof}
As shown in Fig.~\ref{fig:temporalLogic}, the conjunctions in \eqref{eq:RTLfinal} correspond to intersections of the sets containing safe factors, which is equivalent to a union of the polytopes $\fac$ representing potentially unsafe factors. Similarly, the disjunctions in \eqref{eq:RTLfinal} correspond to a union of safe factors which is equivalent to an intersection of unsafe factors.
Alg.~\ref{alg:modelCheck} summarizes the corresponding model checking procedure. The system satisfies the STL specification if the conjunctions and disjunctions cancel out all unsafe sets $\fac$, which corresponds to an empty list $\mathbf{L} = \emptyset$. 

\begin{algorithm}[!tb]
	\caption{Automated Verification} \label{alg:verify}
	\vspace{2pt}
	{\raggedright \textbf{Input:} Linear system $\dot x = A x + B u$, STL specification $\varphi$, initial set $\initSet$, input set $\mathcal{U}$.
	
	\textbf{Output:} Safe ($\mathcal{R}^{\text{\normalfont e}}(t) \vDash \varphi$) or unsafe ($\mathcal{R}^{\text{\normalfont e}}(t) \nvDash \varphi$).
	
	}
	\begin{algorithmic}[1]
		\State $t_{\text{end}} \gets $ final time for temporal logic formula $\varphi$	\label{line:time}
		\State $\Delta t \gets t_{\text{end}}$ \label{line:init}
		\MRepeat
			\State $\kappa \gets$ increase $\kappa$ until $1 - ||\mathbfcal{T}^{(\kappa)}||_F / ||\mathbfcal{T}^{(\kappa+1)}||_F \leq 10^{-10}$ (see \eqref{eq:curv})
			\State $\mathbf{L} \gets$ model check $\varphi$ with parameters $\Delta t$ and $\kappa$ using Alg.~\ref{alg:modelCheck} \label{line:safeStart}
			\If{$\mathbf{L} = \emptyset$} \label{line:safe}
				\State \Return safe \label{line:safeEnd}
			\EndIf
			\State $\textbf{L} \gets$ model check $\neg \varphi$ with parameters $\Delta t$ and $\kappa$ using Alg.~\ref{alg:modelCheck} \label{line:unsafe1}
			\If{$\exists \widetilde \alpha \in [-\mathbf{1},\mathbf{1}]: ~ \widetilde \alpha \not \in \fac_j ~ \forall \fac_j \in \mathbf{L}$} \label{line:unsafe}
				\State \Return unsafe \label{line:unsafe2}
			\EndIf
			\State $\Delta t \gets \Delta t /2$ \label{line:refine}
		\EndRepeat		
	\end{algorithmic}
\end{algorithm}

\section{Automated Verification}
\label{sec:verify}

The overall verifier is summarized in Alg.~\ref{alg:verify}. Since all temporal operators are bounded in time, the STL formula $\varphi$ is restricted to a finite time interval, which we determine in Line~\ref{line:time}. We then first initialize the time step size $\Delta t$ with the overall time horizon $t_{\text{end}}$ (see Line \ref{line:init}), and afterward refine it in each iteration of the main loop (see Line~\ref{line:refine}). For tuning the truncation order $\kappa$ we apply the strategy from \cite[Sec.~IV.B]{Wetzlinger2022}, which is based on the interval matrices $\mathbfcal{T}^{(o)}$ in \eqref{eq:curv}.
Starting from $\kappa = 2$, we increase $\kappa$ until the relative change between the Frobenius norms $|| \mathbfcal{T}^{(\kappa)} ||_F$ and $|| \mathbfcal{T}^{(\kappa+1)} ||_F$ computed according to \cite[Thm.~10]{Farhadsefat2011} is smaller than $10^{-10}$.
As explained in Sec.~\ref{sec:modelChecking}, the system satisfies the STL formula if the list $\mathbf{L}$ containing the sets of unsafe factors is empty, which we check in Line~\ref{line:safe}. On the other hand, the system violates the STL formula $\varphi$ if there exists a single initial state and input signal that satisfy the negated formula $\neg \varphi$. Consequently, to prove $\mathcal{R}(t) \not \vDash \varphi$, we first run Alg.~\ref{alg:modelCheck} to check if $\mathcal{R}(t)$ satisfies $\neg \varphi$ in Line~\ref{line:unsafe1}, which yields a list of safe sets $\mathbf{L}$. Then, we need to show that there exists a vector of factors $\widetilde \alpha \in [-\mathbf{1},\mathbf{1}]$ that does not intersect any of the safe sets $\fac_j = \langle C_j,d_j\rangle_P \in \mathbf{L}$ (see Line~\ref{line:unsafe}), and therefore corresponds to an unsafe initial state and input signal. This can be realized by solving the following mixed-integer linear program:
\begin{equation}
\begin{split}
		\min_{\widetilde \alpha \in [-\mathbf{1},\mathbf{1}]} \| \widetilde \alpha \|_1  ~~ \text{subject to} ~~ & \forall j \in \{1,\dots,|\mathbf{L}|\}, ~ \forall k \in \{ 1,\dots,\poly_j\}: \\
		& C_{j(\cdot,k)} \, \widehat{\alpha}_{jk} > \lambda_{jk} \, d_{j(k)}, ~ - \mathbf{1} \, \lambda_{jk} \leq \widehat{\alpha}_{jk} \leq \mathbf{1} \, \lambda_{jk}, \\
		& \textstyle \lambda_{jk} \in \{0,1\}, ~ \widetilde \alpha = \sum_{k=1}^{s_j} \widehat{\alpha}_{jk}, ~\sum_{k=1}^{s_j} \lambda_{jk} = 1.
\end{split}
\label{eq:mixedInteger}
\end{equation}
If \eqref{eq:mixedInteger} has a feasible solution, the system violates the specification $\varphi$. In addition, the optimal solution $\widetilde \alpha = [\alpha_x \concat \alpha_{u,1} \concat \dots \concat \alpha_{u,t_{\text{end}}/\Delta t}]$ for \eqref{eq:mixedInteger} defines a falsifying trajectory
\begin{equation*}
	x(t_{i+1}) = e^{A \Delta t} x(t_i) + T(c_u + G_u \alpha_{u,i}), \quad \quad i = 0,\dots, t_{\text{end}}/\Delta t
\end{equation*}
with initial state $x(t_0) = c_x + G_x \alpha_x$.

The verification problem for linear systems against STL specifications is undecidable in general since it is not possible to compute the exact reachable set \cite{Lafferriere2001}. However, undecidable verification problems only occur if the reachable set is located exactly on the decision boundary between safe and unsafe, and are therefore quite rare in practice. Consequently, most verification problem are decidable, and Alg.~\ref{alg:verify} returns the correct result in finite time in these cases:

\begin{theorem}
	If the verification problem defined by the linear system $\dot x = Ax + Bu$, the STL specification $\varphi$, the initial set $\mathcal{X}_0$, and the input set $\mathcal{U}$ represents a decidable problem instance, Alg.~\ref{alg:verify} will terminate in finite time.
\end{theorem}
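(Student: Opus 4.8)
The plan is to show that under the decidability assumption, the main loop of Alg.~\ref{alg:verify} cannot run forever by exhibiting a time step size $\Delta t^*$ below which one of the two return statements must fire. First I would make precise what \emph{decidable} means here: the exact reachable set $\mathcal{R}^{\text{e}}(t)$ is not located exactly on the decision boundary, i.e. either (a) every trace in $\mathcal{R}^{\text{e}}$ satisfies $\varphi$ with some strictly positive ``robustness'' margin $\varepsilon > 0$ (the reachable set is strictly inside the safe region dictated by the RTL polytopes, with the margin measured both in space and in the time-discretization), or (b) there exists a concrete initial state $x_0^\star \in \mathcal{X}_0$ and input signal $u^\star(\cdot) \in \mathcal{U}$ whose trajectory violates $\varphi$ with a strictly positive margin. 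These two cases are exhaustive precisely because the problem is not on the boundary.

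The key steps are then as follows. Step 1: invoke the convergence results already granted by the excerpt — Alg.~\ref{alg:reach} produces an enclosure $\mathcal{R}(t) \supseteq \mathcal{R}^{\text{e}}(t)$ whose over-approximation error, and hence the error bound $\mathcal{E}(t)$ in Prop.~\ref{prop:dependency}, shrinks to zero as $\Delta t \to 0$ (proof deferred to \ref{sec:appendixB}); and the conservatism of the STL-to-RTL conversion also vanishes as $\Delta t \to 0$ \cite{Roehm2016b}. Step 2 (case (a)): pick $\Delta t^*$ small enough that the total conservatism — reachability over-approximation plus RTL time-discretization — is below $\varepsilon$; then the RTL formula $\varphi_{\text{rtl}}$ is genuinely satisfied by $\mathcal{R}(t)$, so every intersection check in \eqref{eq:RTLfinal} that needs to be empty actually is, the unsafe-factor polytopes $\fac$ produced via Prop.~\ref{prop:constraints} get cancelled by the conjunction/disjunction bookkeeping in Alg.~\ref{alg:modelCheck}, the returned list is $\mathbf{L} = \emptyset$, and Alg.~\ref{alg:verify} returns ``safe'' in Line~\ref{line:safeEnd}. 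Step 3 (case (b)): here one runs Alg.~\ref{alg:modelCheck} on $\neg\varphi$; the witness trajectory for $x_0^\star, u^\star$ survives the piecewise-constant-input approximation for $\Delta t$ small (since piecewise constant inputs approximate $u^\star(\cdot)$ arbitrarily well and the violation margin is strictly positive), so in factor space there is a point $\widetilde\alpha^\star \in [-\mathbf 1,\mathbf 1]$ lying outside every safe polytope $\fac_j \in \mathbf{L}$; hence the feasibility test in Line~\ref{line:unsafe} (equivalently the MILP \eqref{eq:mixedInteger}) succeeds and Alg.~\ref{alg:verify} returns ``unsafe'' in Line~\ref{line:unsafe2}. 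Step 4: since $\Delta t$ is halved each iteration from the initial value $t_{\text{end}}$, after finitely many iterations $\Delta t \leq \Delta t^*$, so termination occurs in finite time; the truncation-order tuning in Line 4 terminates in finitely many steps too because $\|\mathbfcal{T}^{(\kappa)}\|_F \to 0$.

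The main obstacle I expect is Step 2/Step 3 — specifically, turning the qualitative ``conservatism $\to 0$'' statements into a clean quantitative threshold $\Delta t^*$ that simultaneously controls \emph{all} sources of over-approximation and matches them against the \emph{same} robustness margin $\varepsilon$. There are several moving pieces that must be reconciled: the spatial bloating $\mathcal{E}(t)$ from reachability, the extra over-approximation from enclosing $\mathcal{C}_i$ and $\mathcal{D}$ by intervals and from the convex-hull formula \eqref{eq:zonoConvHull}, the temporal coarsening built into RTL (which requires checking time-interval reachable sets $\mathcal{R}(\tau_i)$, not just time points), and the piecewise-constant-input approximation of $\mathcal{U}$. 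One must argue that a single $\Delta t$ drives all of them simultaneously below $\varepsilon$, which is really a uniform-continuity / compactness argument over the finite horizon $[0,t_{\text{end}}]$. A secondary subtlety is that the notion of ``decidable'' must be defined so that the dichotomy (a)/(b) is exactly the complement of the on-the-boundary case from \cite{Lafferriere2001}; stating this cleanly, and confirming that in case (b) the falsifying behaviour can be taken with piecewise-constant (not merely measurable) input, is where the argument needs the most care.
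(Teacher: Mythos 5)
Your proposal follows essentially the same argument as the paper: the proof there likewise identifies exactly the three over-approximation sources you list (reachability enclosure, STL-to-RTL time discretization, piecewise-constant input approximation), cites \ref{sec:appendixB} and \cite{Roehm2016b} for their convergence to zero as $\Delta t \to 0$, and concludes that a sufficiently small $\Delta t$ — reached after finitely many halvings — must verify or falsify a decidable (off-boundary) instance. Your write-up is in fact more explicit than the paper's, which does not spell out the positive-margin dichotomy or the quantitative threshold $\Delta t^*$ that you correctly flag as the delicate point.
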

\begin{proof}
	Our verification approach contains the following three sources of over-approximation errors:
	\begin{enumerate}
	  \item The error from the conservative enclosure of the reachable set.
	  \item The error from the time-discretization of the STL formula during the conversion to RTL.
	  \item The error through the approximation of time-varying inputs with piecewise constant inputs.
	\end{enumerate}
	As we show in \ref{sec:appendixB}, the over-approximation error from reachability analysis converges to 0 for $\Delta t \to 0$. Similarly, the error from the time-discretization during conversion to RTL becomes 0 for $\Delta t \to 0$ \cite{Roehm2016b}, and the space of piecewise constant inputs contained inside the input set $\mathcal{U}$ converges to the space of time-varying input signals contained in $\mathcal{U}$ for $\Delta t \to 0$. Consequently, all errors converge to 0 for $\Delta t \to 0$, so that Alg.~\ref{alg:verify} will always be able to find a $\Delta t$ small enough to either verify or falsify the specification.
\end{proof}
If the verification problem instance is not decidable, Alg.~\ref{alg:verify} does not terminate. To accelerate the computation, we additionally implemented two improvements for our basic verification approach presented in this section: 
\begin{itemize}
	\item[$\bullet$] We already evaluate all predicates prior to the conversion to RTL and substitute predicates that are either satisfied or violated for the whole reachable set by true/false.  
	\item[$\bullet$] We omit the conversion to conjunctive normal form performed during conversion to RTL for cases where it does not bring a benefit. In particular, this is the case if the conversion does not result in a disjunction of atomic predicates.
\end{itemize}
Finally, Alg.~\ref{alg:verify} naturally splits the domain $\widetilde \alpha \in [-\mathbf{1},\mathbf{1}]$ corresponding to the initial set and space of input signals into safe and unsafe subsets. We can therefore directly use the resulting list of unsafe sets $\mathbf{L}$ to identify initial states and input signals that are safe or to cut away all reachable states that violate the temporal logic specification.


\section{Numerical Examples}
\label{sec:numEx}

We now demonstrate the performance of our automated verifier on several challenging benchmark systems. All computations are carried out in MATLAB on a 2.9GHz quad-core i7 processor with 32GB memory. Moreover, we integrated the implementation of our automated verifier into the open-source reachability toolbox CORA \cite{Althoff2015a}, and published a repeatability package that reproduces all results presented in this section on CodeOcean\footnote{\url{https://codeocean.com/capsule/4858011/tree/v1}}.

\subsection{Comparison with Other Approaches}

We first compare our method with the tool SMT{\sc mc} \cite{Yu2022}, which implements formal verification using SMT solving \cite{Lee2021,Bae2019}. For the comparison we consider the system and STL specification in Fig.~\ref{fig:example}. Since SMT{\sc mc} does not support systems with inputs, we slightly modify the system by keeping the input $u$ constant over time, so that it can be modeled as an additional state with dynamics $\dot u = 0$. Even for this very simple system SMT{\sc mc} needs 1462 seconds to verify that the system is safe, where we used a time step size of $\Delta t = 0.1$. Our approach, on the other hand, verifies safety in only 0.25 seconds and is in addition fully automatic, so that we do not have to manually select a suitable time step size. Moreover, for the more complex systems discussed in Sec.~\ref{subsec:robot} and Sec.~\ref{subsec:arch}, SMT{\sc mc} is not able at all to provide a verification result in reasonable time, where we aborted the analysis after one hour. 

\begin{figure}[!tb]
  \centering
  \includegraphics[width = 0.98\columnwidth]{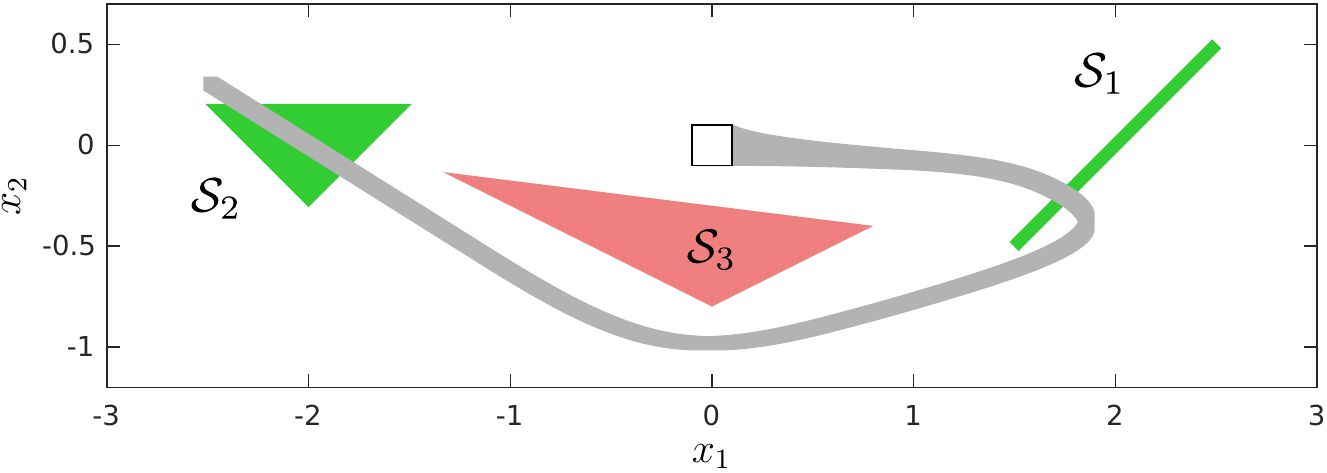}
  \caption{Reachable set for the mobile robot, where the initial set is shown in white with a black border and the sets $\mathcal{S}_1$, $\mathcal{S}_2$, $\mathcal{S}_3$ are visualized in green and red.}
  \label{fig:mobileRobot}
\end{figure}

\subsection{Mobile Robot}
\label{subsec:robot}

One typical application for temporal logic is the formulation of tasks for mobile robots. Given the sets $\mathcal{S}_1$, $\mathcal{S}_2$, $\mathcal{S}_3$ visualized in Fig.~\ref{fig:mobileRobot}, we consider the following reach-avoid task specified in natural language: 
\begin{center}
"Visit $\mathcal{S}_1$ during the first 4 seconds and afterward visit  $\mathcal{S}_2$ \\ $~~~~~~$exactly 6 seconds after $\mathcal{S}_1$ while avoiding $\mathcal{S}_3$ at all times."
\end{center}
This corresponds to the following temporal logic specification:
\begin{equation*}
	\varphi = \finally_{[0,4]} \big( x \in \mathcal{S}_1 \wedge \nextOp_{6} \, x \in \mathcal{S}_2 \big) \wedge \globally_{[0,10]} \, x \not \in \mathcal{S}_3.
\end{equation*}	
We model the dynamics of the mobile robot by a double-integrator for the x- and y-position, which yields the system matrices $A = [\mathbf{0} ~ [I_2 \concat\mathbf{0}]]$ and $B = [\mathbf{0} \concat I_2]$. Moreover, we consider that the mobile robot tracks a reference trajectory $x_{\text{ref}}(t)$ that corresponds to the piecewise constant control input $u_{\text{ref}}(t)$ using the feedback control law $u_{\text{ctr}}(t) = u_{\text{ref}}(t) + K(x(t) - x_{\text{ref}}(t))$. The corresponding feedback matrix $K \in \R^{2 \times 4}$ is determined by applying an LQR control approach with state weighting matrix $Q = I_4$ and input weighting matrix $R = 0.1 \cdot I_2$ to the open-loop system. Overall, the dynamics of the controlled system is given as
\begin{equation*}
	\begin{bmatrix} \dot x(t) \\ \dot x_{\text{ref}}(t) \end{bmatrix}
	= \begin{bmatrix} A \hspace{-2pt} + \hspace{-2pt} BK & -BK \\ \mathbf{0} & A \end{bmatrix} \begin{bmatrix} x(t) \\ x_{\text{ref}}(t) \end{bmatrix}
	+ \begin{bmatrix} B & B \\ B & \mathbf{0} \end{bmatrix} u(t).
\end{equation*}
Moreover, the initial set is $\initSet = [-0.1,0.1] \si{\meter} \times [-0.1,0.1]\si{\metre} \times \mathbf{0}$ and the set of uncertain inputs is $\mathcal{U} = u_{\text{ref}}(t) \times [-0.1,0.1]\si{\meter \per \square \second} \times [-0.1,0.1] \si{\meter \per \square \second}$.

The resulting reachable set shown in Fig.~\ref{fig:mobileRobot} demonstrates that the mobile robot satisfies the temporal logic specification $\varphi$. However, since at no point in time the reachable set is fully contained inside the set $\mathcal{S}_1$, the original RTL approach \cite{Roehm2016b} always conservatively classifies the system as unsafe, no matter how often we refine the tightness of the reachable set enclosure. On the other hand, our verifier summarized in Alg.~\ref{alg:verify} fully automatically proves that the system is safe in only $2.89$ seconds.

\begin{table*}[!tb]
\begin{center}
\caption{Computation time in seconds for our automated verifier on different benchmarks from the ARCH competition, where $n$ is the number of states and $m$ the number of inputs of the system.}
\label{tab:ARCH}
\renewcommand{\arraystretch}{1.2}
\footnotesize
\begin{tabular}{ l c c c c c}
 \toprule
 Benchmark \hspace{-10pt} & $n$ & \hspace{-5pt}$m$\hspace{-5pt} & Specification & \hspace{-5pt}Safe?\hspace{-5pt} & Time \\ \midrule
\multirow{2}{*}{BLDC01} & \multirow{2}{*}{49} & \multirow{2}{*}{0} & $\finally_{[0,0.2]}\big( \globally_{[0,0.18]}\, x_{40} < 0.0046 \vee \nextOp_{0.2} \,x_{40} > 0.006 \big)$ & \cmark & 57.7 \\ 
& & & $\finally_{[0,0.2]}\big( \globally_{[0,0.19]}\, x_{40} < 0.0046 \vee \nextOp_{0.2} \, x_{40} > 0.006 \big)$ & \xmark & 25.2 \\ \midrule
\multirow{2}{*}{BLDF01} & \multirow{2}{*}{48} & \multirow{2}{*}{1} & $\finally_{[0,0.2]}\big( \globally_{[0,0.18]}\, x_{40} < 0.0046 \vee \nextOp_{0.2} \,x_{40} > 0.006 \big)$ & \cmark & 59.8 \\ 
& & & $\finally_{[0,0.2]}\big( \globally_{[0,0.19]}\, x_{40} < 0.0046 \vee \nextOp_{0.2} \, x_{40} > 0.006 \big)$ & \xmark & 32.5 \\ \midrule
\multirow{2}{*}{CBC01} & \multirow{2}{*}{201} & \multirow{2}{*}{0} & $x_{110} < 69 ~ \until_{[0,0.004]} \, (x_{140} > 69 \vee x_{140} < -60)$ & \cmark & 5.2 \\ 
& & & $x_{110} < 68 ~ \until_{[0,0.004]} \, (x_{140} > 69 \vee x_{140} < -60)$ & \xmark & 9.1 \\ \midrule
\multirow{2}{*}{CBF01} & \multirow{2}{*}{200} & \multirow{2}{*}{1} & $x_{110} < 69 ~ \until_{[0,0.004]} \, (x_{140} > 69 \vee x_{140} < -60)$ & \cmark & 17.9 \\ 
& & & $x_{110} < 68 ~ \until_{[0,0.004]} \, (x_{140} > 69 \vee x_{140} < -60)$ & \xmark & 30.8 \\ \midrule
\multirow{2}{*}{HEAT01} & \multirow{2}{*}{125} & \multirow{2}{*}{0} & $\globally_{[4,8]} \, x_{10} > 0.1  \vee \finally_{[8,14]} \, x_{10} < 0.1$ & \cmark & 9.1 \\ 
& & & $\globally_{[4,8]} \, x_{10} > 0.1  \vee \finally_{[8,13]} \, x_{10} < 0.1$ & \xmark &  20.1 \\ \midrule
\multirow{2}{*}{HEAT02} & \multirow{2}{*}{1000} & \multirow{2}{*}{0} & $\globally_{[3,4]} \, x_{151} > 0.07  \vee \finally_{[4,6]} \, x_{151} < 0.07$ & \cmark & 52.8 \\ 
& & & $\globally_{[3,4]} \, x_{151} > 0.07  \vee \finally_{[4,5]} \, x_{151} < 0.07$ & \xmark & 50.8 \\ 
 \bottomrule
\end{tabular}
\end{center}
\end{table*}

\subsection{ARCH Benchmarks}
\label{subsec:arch}

To demonstrate that our approach also scales to high-dimensional systems we consider the benchmarks from the 2022 ARCH competition \cite{ARCH22linear}, which represent the limit of complexity that can be handled by state-of-the-art reachability tools. 
For each benchmark, we introduce one safe and one unsafe temporal logic specification. To make the problems more challenging all specifications are close to the decision boundary, meaning that already slight modifications in the specification change the verification result from safe to unsafe or the other way round. 

The original RTL approach \cite{Roehm2016b} conservatively classifies all problem instances as unsafe, even those who satisfy the specification. As shown in Tab.~\ref{tab:ARCH}, our automated verifier on the other hand correctly verifies and falsifies all benchmarks in under one minute, even though we consider complex temporal logic specifications with nested temporal operators and high-dimensional systems with up to 1000 states. Moreover, if the system is unsafe our verifier returns a falsifying trajectory that demonstrates the safety violation, as exemplary shown for the HEAT02 benchmark on the right side of Fig.~\ref{fig:arch}. The left side of Fig.~\ref{fig:arch} highlights the difference between our method and the original RTL approach \cite{Roehm2016b}: The overall reachable set does neither satisfy the specification $\globally_{[3,4]}\, x_{151} > 0.07$ nor $\finally_{[4,6]}\, x_{151} < 0.07$, so that RTL conservatively classifies the system as unsafe. However, since all initial states that violate the specification $\globally_{[3,4]}\, x_{151} > 0.07$ satisfy the specification $\finally_{[4,6]}\, x_{151} < 0.07$, our method is able to prove that the system is safe. Finally, the decomposition of the computation time into the time spend on the different parts of Alg.~\ref{alg:verify} shown in Fig.~\ref{fig:archTimes} demonstrates that the percentage of time spend on each part heavily depends on the benchmark, and that no part clearly dominates the computation time for all benchmarks.

\begin{figure}[!tb]
  \centering
  \includegraphics[width=0.98\columnwidth]{./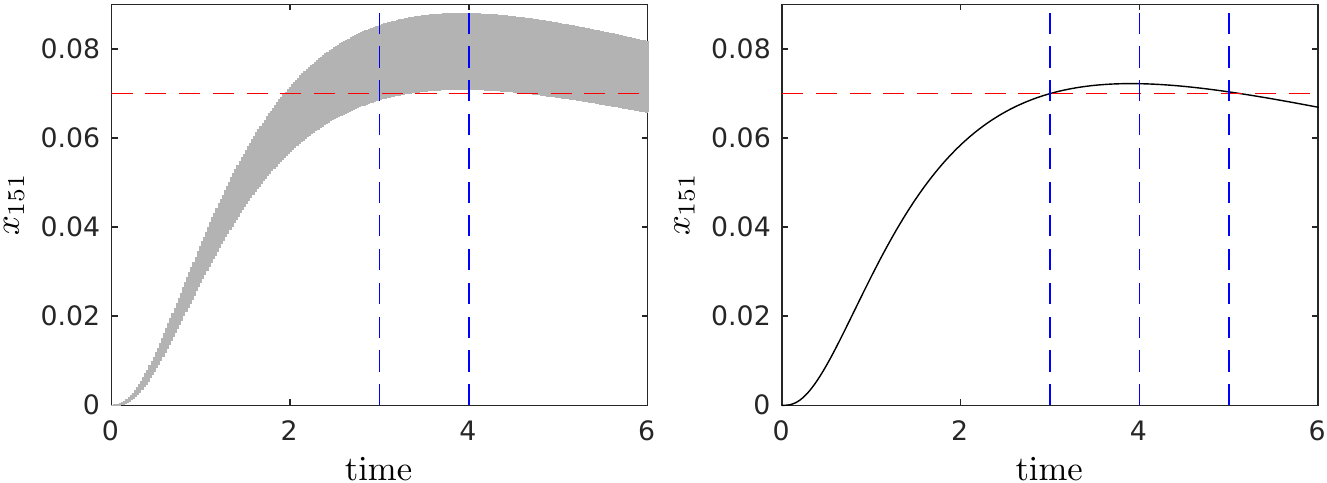}
  \caption{Reachable set (left) and falsifying trajectory (right) for the safe and the unsafe specification of the HEAT02 benchmark, where the dashed red line visualizes the boundary of the safe region and the dashed blue lines mark the points in time relevant for the temporal logic specification.}
  \label{fig:arch}
\end{figure}

\begin{figure}[!tb]
  \centering
  \includegraphics[width=0.98\columnwidth]{./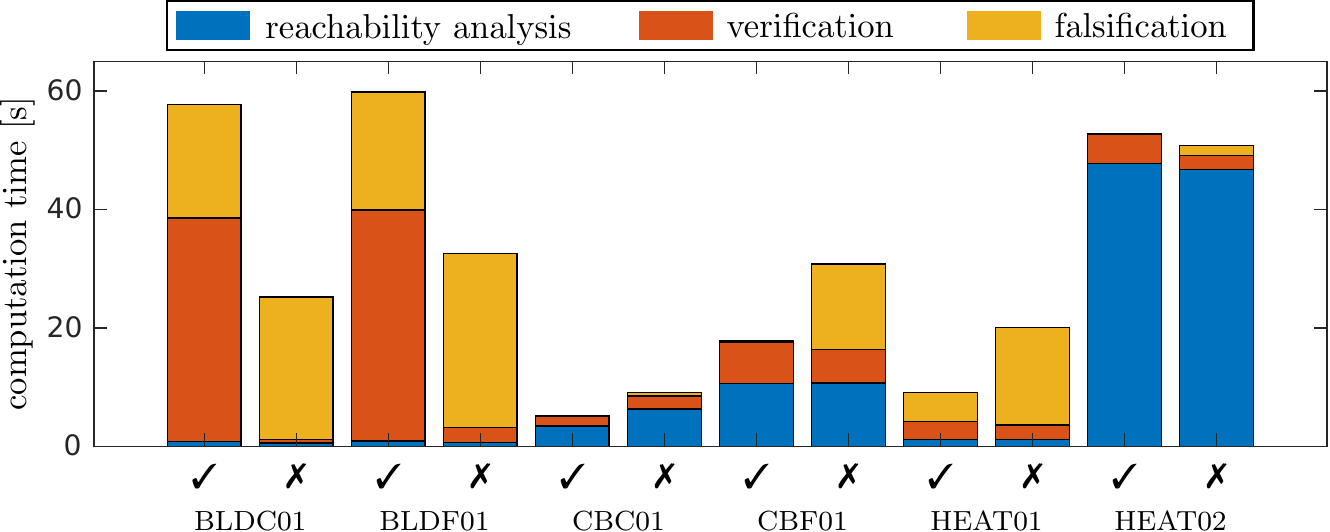}
  \caption{Computation time for our automated verifier on the ARCH benchmarks from Tab.~\ref{tab:ARCH} partitioned into the time required for reachablity analysis (see Alg.~\ref{alg:reach}), for verification (see Line~\ref{line:safeStart}-\ref{line:safeEnd} of Alg.~\ref{alg:verify}), and for falsification (see Line~\ref{line:unsafe1}-\ref{line:unsafe2} of Alg.~\ref{alg:verify}). }
  \label{fig:archTimes}
\end{figure}

\subsection{Prediction of Traffic Participants}

In addition to verifying dynamic systems against temporal logic specifications, our approach can also be used to cut away the parts of the reachable set which violate a temporal logic formula.
To demonstrate this, we consider set-based predication of traffic participants \cite{Koschi2020} in this section instead of a verification task. For set-based prediction of traffic participants, one first computes the space potentially occupied by other cars using reachability analysis, and then cuts away the regions that violate traffic rules. The trajectory of an autonomous car can then be planned to avoid the resulting regions, which guarantees safety at all times. In particular, we examine the traffic scenario shown in Fig.~\ref{fig:commonRoad}, which contains a no passing traffic sign. For this scenario the traffic rules defined by the no passing sign as well as the rule that the car is not allowed to leave the road can be formalized by the following temporal logic specification:  
\begin{equation*}
	\varphi = \globally_{[0,1]}\, (x_1 < 22 \vee x_2 < 2) \wedge \globally_{[0,1]} \, x_2 > -2 \wedge \globally_{[0,1]} \, x_2 < 6.
\end{equation*}
We model the dynamics of the car by a double-integrator for the x- and y-position, which yields the system $\dot x = A \, x + B \, u$ with $A = [\mathbf{0} ~ [I_2 \concat \mathbf{0}]]$ and $B = [\mathbf{0} \concat I_2]$. The uncertainty in the initial set $\initSet = [-0.1,0.1]\si{\meter} \times [-0.1,0.1]\si{\meter} \times [29.9,30.1]\si{\meter \per \second} \times [-0.1,0.1] \si{\meter \per \second}$ captures measurement errors in the position and velocity of the car. Moreover, the uncertain input $u(\cdot) \in \mathcal{U} = [-9,9] \si{\meter \per \square \second} \times [-9,9] \si{\meter \per \square \second}$ represents the unknown behavior of the driver, which is bounded by the physical limits of the car. For set-based prediction we execute Alg.~\ref{alg:modelCheck} with time step size $\Delta t = 0.05\si{\second}$ and truncation order $\kappa = 10$ for the negated specification $\neg \varphi$. The list $\mathbf{L}$ returned by the algorithm then contains all factor values that potentially satisfy the specification $\varphi$. A guaranteed enclosure of the set of all legal behaviors of the car is therefore given by the union of all polytopes in the list $\mathbf{L}$. Please note that the set resulting from the combination of the polytopes for the safe factors and the zonotopes for the reachable sets can be represented as a constrained zonotope \cite{Scott2016}. 

\begin{figure}[!tb]
  \centering
  \includegraphics[width=0.98\columnwidth]{./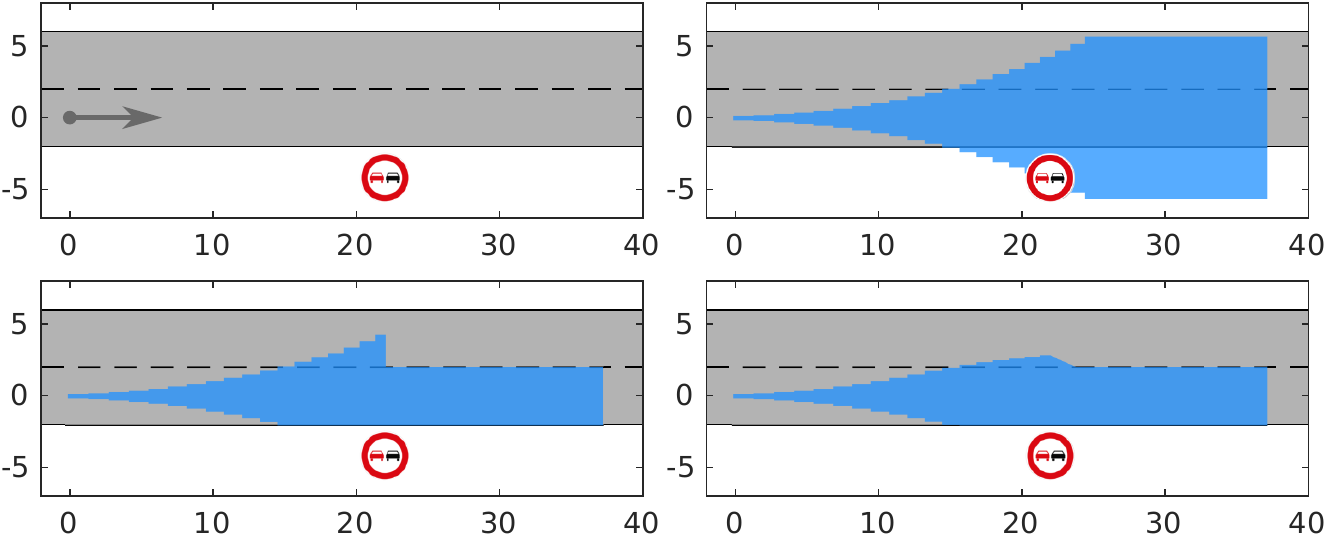}
  \caption{Visualization of set-based prediction of traffic participants showing the considered traffic scenario (top left), the overall reachable set (top right), the naive approach that simply cuts away unsafe regions (bottom left), and the set of all legal behaviors obtained with our approach (bottom right).}
  \label{fig:commonRoad}
\end{figure}

The resulting sets are visualized in Fig.~\ref{fig:commonRoad}. A naive approach is to simply cut away the unsafe regions $x_1 > 22 \wedge x_2 > 2$, $x_2 < -2$, and $x_2 > 6$. However, this ignores the dynamic behavior of the car and therefore yields a quite conservative result (Fig.~\ref{fig:commonRoad}, bottom left). Our approach, on the other hand, explicitly takes the dynamic behavior into account and therefore also cuts away regions that are guaranteed to reach an unsafe region in the future (Fig.~\ref{fig:commonRoad}, bottom right). This yields a smaller region still containing all legal behaviors, which increases the chances for finding a safe trajectory for the autonomous car.

\section{Future Work}
\label{sec:future}

While we in this work focused on linear systems for simplicity, the general framework we presented can equivalently be applied for nonlinear and hybrid systems. Therefore, the extension to nonlinear and hybrid systems is our main focus for future work, and we outline some challenges as well as potential solutions in this section.   

One key requirement for our automated verification framework is a reachability algorithm that preserves dependencies and converges to the exact reachable set if the algorithm parameters are tuned adequately. While dependency preserving reachability algorithms are readily available for nonlinear systems \cite{Kochdumper2020c}, it is more complicated to preserve dependencies for hybrid systems due to the intersections with the guard sets at discrete transitions. One promising strategy to handle those discrete transitions in a dependency-preserving fashion is the guard-mapping approach \cite{Althoff2012a}, since it circumvents the explicit computation of geometric intersections. Moreover, automated parameter tuning is significantly harder for nonlinear and hybrid systems compared to linear systems due to the increased number of algorithm parameters. A promising approach for automated parameter tuning for those systems has been published recently \cite{Wetzlinger2021}, but does not yet guarantee convergence to the exact reachable set. 

Another challenge arises from the fact that reachable sets for nonlinear and hybrid systems are in general non-convex, so that non-convex set representations such as polynomial zonotopes \cite{Kochdumper2019} or Taylor models \cite{Chen2012} are required to compute tight enclosures. In our framework, replacing the convex zonotopes we used for linear systems with those non-convex set representations yields polynomial level sets instead of polytopes for the regions of unsafe zonotope factors. With polynomial level sets the checks for empty sets in Line~\ref{line:empty} of Alg.~\ref{alg:modelCheck} are computationally more expensive, and the optimization problem in \eqref{eq:mixedInteger} becomes a mixed-integer polynomial program, which is harder to solve. Potential solutions to address these issues are using contractor programming \cite[Chapter~4]{Jaulin2006} to check if a polynomial level set is empty, and applying continuous relaxation  \cite{Hamzeei2014}
to warm-start the mixed-integer polynomial program.

\section{Conclusion}

We presented the first fully-automated verifier for linear systems that considers the very general case of specifications defined by signal temporal logic formulas. Our algorithm avoids the conservatism of the previous reachset temporal logic approach by explicitly keeping track of which initial states and which uncertain inputs satisfy or violate single parts of the formula, and is therefore guaranteed to always find the correct solution in finite time for decidable verification problem instances. As we demonstrated with numerical experiments, our automated verifier can even solve high-dimensional benchmarks with up to 1000 states as well as complex temporal logic specifications with nested temporal operators very efficiently. Other advantageous features of our algorithm are that it returns a falsifying trajectory in case of a safety violation and that it naturally divides the initial as well as the input set into parts which satisfy or violate the specification. This is for instance beneficial for set-based prediction of traffic participants, as we demonstrated with an exemplary traffic scenario. 

\vspace{0.5cm}

{\small \noindent \textbf{Acknowledgements.} This material is based upon work supported by the Air Force Office of Scientific Research and the Office of Naval Research under award numbers FA9550-19-1-0288, FA9550-21-1-0121, FA9550-23-1-0066 and N00014-22-1-2156. Any opinions, findings, and conclusions or recommendations expressed in this material are those of the authors and do not necessarily reflect the views of the United States Air Force or the United States Navy.}


%
\bibliographystyle{elsarticle-num}
\bibliography{kochdumper,cpsGroup}


\newpage
\appendix

\section{}
\label{sec:appendix}

We now demonstrate the conversion from signal temporal logic to reachset temporal logic according to \cite[Sec.~4]{Roehm2016b} for the exemplary STL formula
\begin{equation*}
	\varphi = \nextOp_{0.5} \, x_1 > 2 \vee \neg \finally_{[0,0.8]} \, x_2 \leq 3
\end{equation*}
and time step size $\Delta t = 0.5$. The first step is to convert the formula to negation normal form by moving all negations inward until they only appear in front of non-temporal expressions:
\begin{equation*}
	\varphi_{\text{nnf}} = \nextOp_{0.5} \, \underbrace{x_1 > 2}_{\pred_1} \vee \, \globally_{[0,0.8]} \, \underbrace{x_2 > 3}_{\pred_2}.
\end{equation*}
Next, we convert the formula to sampled-time STL with sampling period $\Delta t$. For this, we rewrite all times and time intervals for the temporal operators as integer multiples $i \, \Delta t$, $i \in \mathbb{N}_0$ of the time step size. If the start and end times are not divisible by the time step size we can either extend or shorten the corresponding time intervals in a sound matter. In our formula, for example, we can replace $\globally_{[0,0.8]} \, \pred_2$ by $\globally_{[0,1]} \, \pred_2$ since satisfaction of $\globally_{[0,1]} \, \pred_2$ implies satisfaction of $\globally_{[0,0.8]} \, \pred_2$. For time step size $\Delta t = 0.5$ we therefore obtain 
\begin{equation*}
	\varphi_{\text{st}_{\text{1}}} = \nextOp_{1} \, \pred_1 \vee \globally_{[0,2]} \, \pred_2,
\end{equation*}
which after applying the rewriting rules in \cite[Tab.~1]{Roehm2016b} results in the sampled-time STL formula
\begin{equation*}
	\varphi_{\text{st}_{\text{2}}} = \nextOp_{1} \pred_1 \vee \big( \hspace{-2pt} \nextOp_0 \hspace{-1pt}\pred_2 \wedge \globally_{[0,1]} \, \pred_2 \wedge \nextOp_1  \pred_2 \wedge \nextOp_1 \, \globally_{[0,1]} \, \pred_2 \wedge \nextOp_2 \, \pred_2 \big).
\end{equation*}
Afterward, we have to convert the formula to conjunctive normal form, which yields
\begin{equation*}
\begin{split}
	\varphi_{\text{cnf}} = \big( \hspace{-2pt} & \nextOp_{1} \hspace{-1pt}\pred_1 \vee \nextOp_0 \, \pred_2 \big) \wedge \big(\hspace{-2pt} \nextOp_{1} \hspace{-2pt}\pred_1 \vee \globally_{[0,1]} \, \pred_2 \big) \wedge \\
	& \nextOp_{1} \hspace{-1pt}(\pred_1 \vee \pred_2 ) \wedge \big(\hspace{-2pt} \nextOp_{1} \hspace{-1pt}\pred_1 \vee \nextOp_1 \, \globally_{[0,1]} \, \pred_2 \big) \wedge \big( \hspace{-2pt} \nextOp_{1}\hspace{-1pt} \pred_1 \vee \nextOp_2 \, \pred_2 \big).
\end{split}
\end{equation*}
Finally, we can apply the conversion to reachset temporal logic in \cite[Lemma~2]{Roehm2016b} to obtain
\begin{align*}
	\varphi_{\text{rtl}} = \big( & \hspace{-2pt} \nextOp_{1} \hspace{-2pt}\all \pred_1 \vee \nextOp_0 \all \pred_2 \big) \wedge \big( \hspace{-2pt} \nextOp_{1} \hspace{-2pt}\all \pred_1 \vee \nextOp_{0.5} \all \pred_2 \big) \wedge \\
	& \hspace{-2pt} \nextOp_{1} \hspace{-2pt}\all (\pred_1 \vee \pred_2 ) \wedge \big( \hspace{-2pt} \nextOp_{1} \hspace{-2pt}\all \pred_1 \vee \nextOp_{1.5} \all \pred_2 \big) \wedge \big( \hspace{-2pt} \nextOp_{1}\hspace{-2pt} \all \pred_1 \vee \nextOp_2 \all \pred_2 \big).
\end{align*}
Moreover, according to \cite[Sec.~5]{Roehm2016b} entailment can be equivalently formulated in terms of intersection checks with polytopes as in \eqref{eq:RTLfinal}:
\begin{align*}
	\mathcal{R} \vDash \varphi_{\text{rtl}} ~~ \Leftrightarrow ~~ & \big( \mathcal{R}(t_1)  \cap \mathcal{P}_1 = \emptyset \vee \mathcal{R}(t_0) \cap \mathcal{P}_2 = \emptyset \big) \wedge \\
	& \big( \mathcal{R}(t_1) \cap \mathcal{P}_1 = \emptyset \vee \mathcal{R}(\tau_0) \cap \mathcal{P}_2 = \emptyset \big) \wedge \\
	& ~ \mathcal{R}(t_1) \cap \mathcal{P}_{12} = \emptyset \wedge \\
	&    \big( \mathcal{R}(t_1) \cap \mathcal{P}_1 = \emptyset \vee \mathcal{R}(\tau_1) \cap \mathcal{P}_2 = \emptyset \big) \wedge \\
	&  \big( \mathcal{R}(t_1) \cap \mathcal{P}_1 = \emptyset \vee \mathcal{R}(t_2) \cap \mathcal{P}_2 = \emptyset \big),
\end{align*}
where the polytopes $\mathcal{P}_1$, $\mathcal{P}_2$, and $\mathcal{P}_{12}$ corresponding to the predicates $\pred_1$, $\pred_2$, and $\pred_{1} \vee \pred_{2}$ are defined as
\begin{equation*}
	\mathcal{P}_1 = \big \langle [1~0],2 \big \rangle_P, ~~ \mathcal{P}_2 = \big \langle [0~1],3 \big \rangle_P, ~~ \mathcal{P}_{12} = \bigg \langle \begin{bmatrix} 1~ & ~0 \\ 0~ & ~1 \end{bmatrix},\begin{bmatrix} 2 \\ 3 \end{bmatrix} \bigg \rangle_P.
\end{equation*}
The resulting formula can therefore be directly evaluated on the reach sequence $\mathcal{R}(t_0),\mathcal{R}(\tau_0),\mathcal{R}(t_1),\mathcal{R}(\tau_1),\mathcal{R}(t_2)$.

\section{}
\label{sec:appendixB}

We now prove that the enclosure computed with the reachability algorithm in Alg.~\ref{alg:reach} converges to the exact reachable set for $\Delta t \to 0$. In summary, Alg.~\ref{alg:reach} contains three sources of over-approximation: 
\begin{enumerate}
	\item The enclosure of the difference between the reachable set due to constant and time-varying inputs $\mathcal{D}$.
	\item The curvature enclosure $\mathcal{C}_i$.
	\item The over-approximation in the zonotope enclosure of the convex hull \\ $conv(\mathcal{H}(t_i),\mathcal{H}(t_{i+1}))$ computed using \eqref{eq:zonoConvHull}. 
\end{enumerate}
We therefore have to show that all these over-approximation errors converge to zero for $\Delta t \to 0$. We begin with the difference between constant and time-varying inputs:

\begin{proposition}
    The enclosure of the difference between the reachable set due to constant and time-varying inputs $\mathcal{D}$ according to \eqref{eq:inputDiff} satisfies
	\begin{equation*}
		\lim_{\Delta t \to 0} \mathcal{D} = \mathbf{0}.
	\end{equation*}	
\end{proposition}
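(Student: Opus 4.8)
The plan is to show that each of the three Minkowski-sum terms in the definition of $\mathcal{D}$ in \eqref{eq:inputDiff} collapses to $\{\mathbf{0}\}$ as $\Delta t \to 0$. Since the Minkowski sum of sets each converging to $\{\mathbf{0}\}$ also converges to $\{\mathbf{0}\}$, it suffices to treat the three terms one at a time. First I would note that $\mathcal{U}_0 = B(\mathcal{U} \oplus (-c_u))$ is a fixed, bounded zonotope that does not depend on $\Delta t$; write $\rho := \max_{y \in \mathcal{U}_0} \|y\|$ for its radius, which is a finite constant. Thus everything reduces to showing that the matrix (or interval-matrix) coefficients multiplying $\mathcal{U}_0$ in each term have norm tending to zero.

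For the first term $\big(\sum_{j=1}^\kappa \frac{A^j \Delta t^{j+1}}{(j+1)!}\big)\mathcal{U}_0$, the coefficient is an ordinary matrix whose norm is bounded by $\sum_{j=1}^{\kappa} \frac{\|A\|^j \Delta t^{j+1}}{(j+1)!} \leq \Delta t^2 \sum_{j=1}^{\infty} \frac{\|A\|^j \Delta t^{j-1}}{(j+1)!} = O(\Delta t^2)$ uniformly for $\Delta t$ in a bounded range (the tail series converges to something bounded by $e^{\|A\|\Delta t}/\|A\|^2$-type expression). Hence the linear map of the bounded set $\mathcal{U}_0$ under a matrix of norm $O(\Delta t^2)$ shrinks to $\{\mathbf{0}\}$. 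The second term $\bigoplus_{j=1}^\kappa \frac{A^j \Delta t^{j+1}}{(j+1)!}\mathcal{U}_0$ is handled the same way: it is a finite Minkowski sum of $\kappa$ sets, each a linear image of $\mathcal{U}_0$ under a matrix of norm $\frac{\|A\|^j \Delta t^{j+1}}{(j+1)!} = O(\Delta t^2)$, so each summand — and therefore the whole sum — tends to $\{\mathbf{0}\}$. I would phrase both bounds in terms of, e.g., enclosing each set in a ball of radius proportional to $\rho$ times the coefficient norm, so that the conclusion $\lim_{\Delta t \to 0} = \mathbf{0}$ is immediate.

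The only subtle term is the third, $2\,\Delta t\,\mathbfcal{E}\,\mathcal{U}_0$, because the truncation order $\kappa$ is itself allowed to depend on $\Delta t$ in Alg.~\ref{alg:reach} and $\mathbfcal{E} = [-E,E]$ with $E = e^{|A|\Delta t} - \sum_{j=0}^\kappa \frac{(|A|\Delta t)^j}{j!}$ is the Taylor remainder. I would bound $\|\mathbfcal{E}\|_F$ (equivalently, every entry of $E$) by the standard remainder estimate: since $E$ is the entrywise tail of the exponential series, $0 \leq E \leq \frac{(|A|\Delta t)^{\kappa+1}}{(\kappa+1)!}\, e^{|A|\Delta t}$ entrywise, so in particular $\|E\| \leq \frac{(\|A\|\Delta t)^{\kappa+1}}{(\kappa+1)!}e^{\|A\|\Delta t} \to 0$ as $\Delta t \to 0$ regardless of how $\kappa$ is chosen, because $\kappa \geq 1$ forces the bound to be at most $O(\Delta t^2)$ and because larger $\kappa$ only makes the remainder smaller. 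Therefore $2\,\Delta t\,\mathbfcal{E}$ is an interval matrix of norm $O(\Delta t^2)$, its multiplication with the bounded zonotope $\mathcal{U}_0$ (enclosed via \cite[Thm.~4]{Althoff2010a}) lies in a ball whose radius is bounded by a constant times $\Delta t^2 \rho$, hence converges to $\{\mathbf{0}\}$. Combining the three pieces, $\mathcal{D}$ is contained in a ball of radius $O(\Delta t^2)$, which gives $\lim_{\Delta t \to 0}\mathcal{D} = \mathbf{0}$.

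The main obstacle — though a mild one — is keeping the argument valid \emph{uniformly in $\kappa$}: one must observe that enlarging $\kappa$ never enlarges any of the three coefficients (the partial sums in the first two terms only add more $O(\Delta t^{\geq 3})$ corrections, and the remainder $E$ is monotonically decreasing in $\kappa$), so the $O(\Delta t^2)$ bounds hold no matter which $\kappa = \kappa(\Delta t)$ the algorithm selects. With that observation in hand, everything else is the routine estimate that a linear (or interval-linear) image of a fixed bounded set under coefficients of vanishing norm is a set of vanishing radius.
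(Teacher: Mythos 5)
Your proof is correct and follows essentially the same route as the paper's: both argue term by term that each of the three Minkowski summands in \eqref{eq:inputDiff} vanishes because its (interval-)matrix coefficient tends to zero while $\mathcal{U}_0$ is a fixed bounded set, the paper doing so by a formal substitution $\Delta t \to 0$. Your version is simply more rigorous, in particular by supplying explicit $O(\Delta t^2)$ norm bounds and by checking that the conclusion holds uniformly in the truncation order $\kappa$, a dependence the paper's one-line computation silently ignores.
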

\begin{proof}
	With the formula for $\mathcal{D}$ in \eqref{eq:inputDiff}, we obtain
	\begin{align*}
		\lim_{\Delta t \to 0} \mathcal{D} & \overset{\eqref{eq:inputDiff}}{=}	\lim_{\Delta t \to 0} \bigg( \sum_{j=1}^\kappa \frac{A^j \Delta t^{j+1}}{(j+1)!} \bigg) \mathcal{U}_0 \oplus \bigoplus_{j=1}^\kappa \frac{A^j \Delta t^{j+1}}{(j+1)!}\, \mathcal{U}_0 \oplus 2 \,  \Delta t \, \mathbfcal{E} \, \mathcal{U}_0 \\
		& ~= \bigg( \sum_{j=1}^\kappa \frac{A^j \, 0}{(j+1)!} \bigg) \mathcal{U}_0 \oplus \bigoplus_{j=1}^\kappa \frac{A^j \, 0}{(j+1)!}\, \mathcal{U}_0 \oplus 2 \,  0 \, \mathbfcal{E} \, \mathcal{U}_0 = \mathbf{0},
	\end{align*}
	which concludes the proof.
\end{proof}

Next, we consider the curvature enclosure:
\begin{proposition}
	The curvature enclosure $\mathcal{C}_i$ according to \eqref{eq:Cu} satisfies
	\begin{equation*}
		\lim_{\Delta t \to 0} \mathcal{C}_i = \mathbf{0}.
	\end{equation*}
\end{proposition}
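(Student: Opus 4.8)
The plan is to show that every interval-matrix coefficient occurring in $\mathcal{C}_i$ vanishes as $\Delta t \to 0$ while the prefactor $(e^{A\Delta t})^i$ and the fixed sets $\mathcal{X}_0$, $\widetilde{u}$ stay uniformly bounded, so that the whole expression in \eqref{eq:Cu} collapses to $\mathbf{0}$ (in the sense that the interval enclosure of $\mathcal{C}_i$, equivalently its Hausdorff distance to $\{\mathbf{0}\}$, tends to zero). First I would inspect $\mathbfcal{T}^{(o)}$ in \eqref{eq:curv}: it is a \emph{finite} Minkowski sum (the order $o \in \{\kappa,\kappa+1\}$ is fixed independently of $\Delta t$ by Line~4 of Alg.~\ref{alg:verify}), and every summand carries a factor $\Delta t^j$ with $j \geq 2$ multiplying the bounded coefficient $\big(j^{-j/(j-1)} - j^{-1/(j-1)}\big)\tfrac{A^{j-1}}{j!}$; hence $\lim_{\Delta t \to 0}\mathbfcal{T}^{(\kappa)} = \lim_{\Delta t \to 0}\mathbfcal{T}^{(\kappa+1)} = \mathbf{0}$ in, say, the Frobenius norm on interval matrices. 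Next I would treat $\mathbfcal{E} = [-E,E]$: since $E = e^{|A|\Delta t} - \sum_{j=0}^{\kappa}\tfrac{(|A|\Delta t)^j}{j!}$ is exactly the truncated Taylor remainder of the matrix exponential, $E \to \mathbf{0}$ and hence $\mathbfcal{E} \to \mathbf{0}$ as $\Delta t \to 0$. Combining these with $\mathbfcal{F} = \mathbfcal{T}^{(\kappa)}A \oplus \mathbfcal{E}$ and $\mathbfcal{G} = \mathbfcal{T}^{(\kappa+1)} \oplus \mathbfcal{E}\Delta t$ and continuity of interval-matrix arithmetic yields $\lim_{\Delta t \to 0}\mathbfcal{F} = \mathbf{0}$ and $\lim_{\Delta t \to 0}\mathbfcal{G} = \mathbf{0}$.

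The step requiring care is the prefactor $(e^{A\Delta t})^i = e^{A i \Delta t}$. As $\Delta t \to 0$ the index $i$ ranges up to $t_{\text{end}}/\Delta t - 1$ and is therefore unbounded, so one must \emph{not} argue $(e^{A\Delta t})^i \to I_n$. Instead I would use that for every index appearing in the reach sequence \eqref{eq:reachSeq} we have $i\Delta t \in [0, t_{\text{end}}]$, so $\|(e^{A\Delta t})^i\| = \|e^{A i \Delta t}\| \leq e^{\|A\|\,t_{\text{end}}} =: \beta$, a constant independent of $\Delta t$ and of $i$. Likewise $\mathcal{X}_0$ is a fixed bounded zonotope and $\widetilde{u} = B c_u$ a fixed vector, so there is a constant $\rho$ with $\|x\| \leq \rho$ for all $x \in \mathcal{X}_0$ and $\|\widetilde{u}\| \leq \rho$.

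Finally I would assemble the estimate. Using the enclosure of an interval-matrix/zonotope product from \cite[Thm.~4]{Althoff2010a} together with the Frobenius-norm bound, the set $\mathbfcal{F}\,\mathcal{X}_0 \oplus \mathbfcal{G}\,\widetilde{u}$ is contained in a ball of radius $c\,\rho\,(\|\mathbfcal{F}\|_F + \|\mathbfcal{G}\|_F)$ with $c$ depending only on the dimension and the number of generators of $\mathcal{X}_0$, and applying the bounded map $(e^{A\Delta t})^i$ puts $\mathcal{C}_i$ inside a ball of radius $\beta\,c\,\rho\,(\|\mathbfcal{F}\|_F + \|\mathbfcal{G}\|_F)$, uniformly in $i$. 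Since each interval matrix $\mathbfcal{T}^{(o)}$ contains $\mathbf{0}$ (each defining interval has the form $[\,\cdot\,,0]$) and $\mathbfcal{E}$ is symmetric about $\mathbf{0}$, we have $\mathbf{0} \in \mathbfcal{F}\,\mathcal{X}_0 \oplus \mathbfcal{G}\,\widetilde{u}$, so $\mathbf{0} \in \mathcal{C}_i$ for every $\Delta t$; together with the shrinking outer ball this gives $\lim_{\Delta t \to 0}\mathcal{C}_i = \mathbf{0}$. The main obstacle is precisely the unboundedness of $i$: the argument must replace the naive pointwise limit of $(e^{A\Delta t})^i$ by the uniform bound $\beta = e^{\|A\|\,t_{\text{end}}}$, after which everything reduces to the elementary fact that a bounded linear image of a set that is squeezed between $\{\mathbf{0}\}$ and a ball of vanishing radius tends to $\{\mathbf{0}\}$.
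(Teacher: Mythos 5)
Your proof is correct, and on the one step that actually matters it is more careful than the paper's own argument. The paper cites an external lemma for $\lim_{\Delta t \to 0}\mathbfcal{F} = \mathbf{0}$ and $\lim_{\Delta t \to 0}\mathbfcal{G} = \mathbf{0}$ and then simply writes $\lim_{\Delta t \to 0}(e^{A\Delta t})^i = (e^{A\cdot 0})^i = I_n$, i.e.\ it takes the pointwise limit of the prefactor with $i$ held fixed. You correctly observe that in the setting where this proposition is used (convergence of the whole reach sequence, with $i$ ranging up to $t_{\text{end}}/\Delta t - 1$) that pointwise limit is not legitimate, and you replace it with the uniform bound $\|e^{A i \Delta t}\| \leq e^{\|A\|\,t_{\text{end}}}$ valid for all $i\Delta t \in [0,t_{\text{end}}]$. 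This is the right fix: a vanishing set multiplied by a uniformly bounded family of linear maps still vanishes, uniformly in $i$, whereas the paper's argument only establishes the limit for each fixed $i$. Your direct verification that $\mathbfcal{T}^{(o)} \to \mathbf{0}$ (finite sum, each term carrying $\Delta t^j$ with $j \geq 2$) and $\mathbfcal{E} \to \mathbf{0}$ (Taylor remainder) also makes the proof self-contained where the paper defers to \cite[Lemma~3]{Wetzlinger2022}; and the observation that $\mathbf{0} \in \mathcal{C}_i$ for every $\Delta t$ (since each defining interval of $\mathbfcal{T}^{(o)}$ has the form $[\,\cdot\,,0]$ and $\mathbfcal{E}$ is symmetric) cleanly pins down in what sense the set limit is $\mathbf{0}$. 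What your version buys is a statement that is actually uniform in $i$, which is what the convergence claim for the full reachable-set enclosure in \ref{sec:appendixB} needs; what the paper's shorter version buys is brevity at the cost of an implicit fixed-$i$ assumption.
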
 
\begin{proof}
 	According to \cite[Lemma~3]{Wetzlinger2022}, it holds that
 	\begin{equation}
 		\lim_{\Delta t \to 0} \mathbfcal{F} = \mathbf{0} ~~~~ \text{and} ~~~~ \lim_{\Delta t \to 0} \mathbfcal{G} = \mathbf{0},
 		\label{eq:proofC}
 	\end{equation}
 	with $\mathbfcal{F}$ and $\mathbfcal{G}$ defined as in \eqref{eq:curv}. With the formula for $\mathcal{C}_i$ in \eqref{eq:Cu} we therefore obtain
	\begin{align*}
		\lim_{\Delta t \to 0} \mathcal{C}_i & \overset{\eqref{eq:Cu}}{=} \lim_{\Delta t \to 0} (e^{A \Delta t})^{i} \big( \mathbfcal{F} \, \mathcal{X}_0 \oplus \mathbfcal{G} \, \widetilde{u} \big) = \lim_{\Delta t \to 0} \underbrace{(e^{A \, 0})^{i}}_{=I_n} \big( \mathbfcal{F} \, \mathcal{X}_0 \oplus \mathbfcal{G} \, \widetilde{u} \big) \\
		& ~\, = \lim_{\Delta t \to 0}  \mathbfcal{F} \, \mathcal{X}_0 \oplus \mathbfcal{G} \, \widetilde{u} \overset{\eqref{eq:proofC}}{=} \mathbf{0},
	\end{align*}
	which concludes the proof.
\end{proof}

Finally, we examine the over-approximation introduced by the enclosure of the convex hull: 

\begin{proposition}
	Given two zonotopes $\mathcal{Z}_1,\mathcal{Z}_2 \subset \mathbb{R}^n$, let $\text{\normalfont\texttt{diff}}(\mathcal{Z}_1,\mathcal{Z}_2)$ denote the difference between the exact convex hull $conv(\mathcal{Z}_1,\mathcal{Z}_2)$ as defined in \eqref{eq:defConvHull} and the enclosure computed according to \eqref{eq:zonoConvHull}. The homogeneous solution $\mathcal{H}(t_{i})$ in \eqref{eq:propScheme} satisfies:
	\begin{equation*}
		\lim_{\Delta t \to 0} \text{\normalfont\texttt{diff}}\big(\mathcal{H}(t_i),\mathcal{H}(t_{i+1})\big) = \mathbf{0}.
	\end{equation*}	
\end{proposition}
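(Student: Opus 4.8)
The plan is to exploit that, by the propagation scheme \eqref{eq:propScheme}, $\mathcal{H}(t_{i+1})$ is obtained from $\mathcal{H}(t_i)$ by applying the linear map $e^{A\Delta t}$ and translating by the point $T\widetilde u$. Since $\mathcal{H}(0)=\mathcal{X}_0$, every $\mathcal{H}(t_i)$ then has exactly $\gamma_x$ generators, so $\mathcal{H}(t_i)$ and $\mathcal{H}(t_{i+1})$ carry the same number of generators and the term $G_2^{(2)}$ in the convex-hull enclosure \eqref{eq:zonoConvHull} is empty. Writing $\mathcal{H}(t_i)=\langle c_1,G_1\rangle_Z$ and $\mathcal{H}(t_{i+1})=\langle c_2,G_2\rangle_Z$ with $c_2=e^{A\Delta t}c_1+T\widetilde u$ and $G_2=e^{A\Delta t}G_1$, the enclosure \eqref{eq:zonoConvHull} reduces to $\widehat{\mathcal{Z}}=\langle 0.5(c_1+c_2),\,[\,0.5(G_1+G_2)\ \ 0.5(G_1-G_2)\ \ 0.5(c_1-c_2)\,]\rangle_Z$. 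I would then show, first, that the over-approximation of $\widehat{\mathcal{Z}}$ over the exact convex hull is controlled solely by $c_1-c_2$ and $G_1-G_2$, and second, that both of these vanish as $\Delta t\to 0$.

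For the first part, using $0.5(c_1+c_2)=c_1-0.5(c_1-c_2)$ and $0.5(G_1+G_2)=G_1-0.5(G_1-G_2)$, any point of $\widehat{\mathcal{Z}}$ can be rewritten as $(c_1+G_1\alpha)$ plus the error term $-0.5(c_1-c_2)(1-\lambda)+0.5(G_1-G_2)(\beta-\alpha)$ with $\alpha,\beta\in[-1,1]^{\gamma_x}$ and $\lambda\in[-1,1]$. Since $c_1+G_1\alpha$ ranges over $\mathcal{H}(t_i)\subseteq conv(\mathcal{H}(t_i),\mathcal{H}(t_{i+1}))$ and the error term lies in the bounded set $\mathcal{B}:=\{(c_2-c_1)s+(G_1-G_2)v\mid s\in[0,1],\,v\in[-1,1]^{\gamma_x}\}$, we obtain $\widehat{\mathcal{Z}}\subseteq conv(\mathcal{H}(t_i),\mathcal{H}(t_{i+1}))\oplus\mathcal{B}$, hence $\text{\normalfont\texttt{diff}}(\mathcal{H}(t_i),\mathcal{H}(t_{i+1}))\subseteq\mathcal{B}$, with $\sup_{b\in\mathcal{B}}\|b\|_2\le\|c_1-c_2\|_2+\sqrt{\gamma_x}\,\|G_1-G_2\|_F$.

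For the second part, $G_1-G_2=(I_n-e^{A\Delta t})G_1$ and $c_1-c_2=(I_n-e^{A\Delta t})c_1-T\widetilde u$, and by \eqref{eq:propMat} together with continuity of the matrix exponential, $\|I_n-e^{A\Delta t}\|\to 0$ and $T\to\mathbf{0}$ as $\Delta t\to 0$. It remains to bound $\|c_1\|$ and $\|G_1\|$ uniformly in $\Delta t$: since $\mathcal{H}(t_i)=(e^{A\Delta t})^i\mathcal{X}_0\oplus(\text{a point})$ and $i\le t_{\text{end}}/\Delta t$, we have $\|(e^{A\Delta t})^i\|\le e^{\|A\|\,i\Delta t}\le e^{\|A\|\,t_{\text{end}}}$, a bound independent of $\Delta t$, and the translation point is bounded in the same way. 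Combining these estimates yields $\|c_1-c_2\|\to 0$ and $\|G_1-G_2\|_F\to 0$, so $\mathcal{B}\to\mathbf{0}$ and therefore $\text{\normalfont\texttt{diff}}(\mathcal{H}(t_i),\mathcal{H}(t_{i+1}))\to\mathbf{0}$.

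The step I expect to be the main obstacle is the first part: turning the purely algebraic formula \eqref{eq:zonoConvHull} into a geometric statement of the form $\widehat{\mathcal{Z}}\subseteq conv(\mathcal{Z}_1,\mathcal{Z}_2)\oplus\mathcal{B}$ with $\mathcal{B}$ a Minkowski-additive error set that depends only on the differences $c_1-c_2$ and $G_1-G_2$ and hence captures $\text{\normalfont\texttt{diff}}$. Once this ``the error is sandwiched by a vanishing set'' picture is in place, the remaining convergence is the routine continuity argument for $e^{A\Delta t}$ and $T$. A secondary point needing care is the uniform-in-$\Delta t$ boundedness of $\mathcal{H}(t_i)$ used above, which the $e^{\|A\|\,t_{\text{end}}}$ estimate supplies.
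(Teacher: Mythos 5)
Your proposal is correct, but it proves the statement by a genuinely different route than the paper. The paper's proof is a pure limit argument: since $T\to\mathbf{0}$ and $e^{A\Delta t}\to I_n$ as $\Delta t\to 0$, the second argument $\mathcal{H}(t_{i+1})=e^{A\Delta t}\mathcal{H}(t_i)\oplus T\widetilde{u}$ collapses onto $\mathcal{H}(t_i)$, and the paper then checks that \emph{both} the exact hull \eqref{eq:defConvHull} and the zonotope enclosure \eqref{eq:zonoConvHull} applied to the degenerate pair $(\mathcal{Z},\mathcal{Z})$ return exactly $\mathcal{Z}$; hence both converge to the same set $\mathcal{H}(t_i)$ and their difference vanishes. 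You instead derive a quantitative sandwich $\widehat{\mathcal{Z}}\subseteq conv(\mathcal{Z}_1,\mathcal{Z}_2)\oplus\mathcal{B}$ with an explicit additive error set $\mathcal{B}$ controlled by $c_1-c_2$ and $G_1-G_2$ (your algebra rewriting a point of $\widehat{\mathcal{Z}}$ as an element of $\mathcal{Z}_1$ plus $(c_2-c_1)s+(G_1-G_2)v$ is correct, as is the observation that $G_2^{(2)}$ is empty because the homogeneous propagation preserves the generator count $\gamma_x$). What your version buys is twofold: it yields an explicit $O(\Delta t)$ rate for the over-approximation rather than a bare limit, and — via the bound $\|(e^{A\Delta t})^i\|\leq e^{\|A\|t_{\text{end}}}$ — it is uniform over all steps $i\leq t_{\text{end}}/\Delta t$, whereas the paper's computation of $\lim_{\Delta t\to 0}(e^{A\Delta t})^i=I_n$ implicitly holds $i$ fixed even though $i$ grows as $\Delta t$ shrinks; your treatment is therefore the more careful one on exactly the point that matters for the overall convergence claim. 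What the paper's version buys is brevity and symmetry: it never needs to bound $\mathcal{H}(t_i)$ or unwind the enclosure formula, only to note that both hull notions are idempotent on identical arguments. The only soft spot in your write-up is the unformalized meaning of $\text{\normalfont\texttt{diff}}$, but the paper leaves this equally informal, and your reading (the enclosure exceeds the exact hull by at most the vanishing set $\mathcal{B}$) is consistent with the paper's intent.
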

\begin{proof}
	For the limit case $\Delta t \to 0$, we obtain the following for the convex hull $conv \big(\mathcal{H}(t_i),\mathcal{H}(t_{i+1}) \big)$:
	\begin{equation}
	\begin{split}
		\lim_{\Delta t \to 0} conv \big(\mathcal{H}(t_i),\mathcal{H}(t_{i+1}) \big) &\overset{\eqref{eq:propScheme}}{=} \lim_{\Delta t \to 0} conv \big(\mathcal{H}(t_{i}), e^{A \Delta t} \, \mathcal{H}(t_i) \oplus T \, \widetilde{u} \big) \\
		&~= \lim_{\Delta t \to 0} conv \big(\mathcal{H}(t_{i}), e^{A \Delta t} \, \mathcal{H}(t_i) \oplus A^{-1} (e^{A \Delta t} - I_n) \, \widetilde{u} \big) \\
		&~=  conv \big(\mathcal{H}(t_{i}), \underbrace{e^{A \, 0}}_{=I_n} \, \mathcal{H}(t_i) \oplus A^{-1} (\underbrace{e^{A \, 0}}_{=I_n} - I_n) \, \widetilde{u} \big) \\
		&~= conv \big(\mathcal{H}(t_{i}),\mathcal{H}(t_i)\big).
	\end{split}
	\label{eq:proofConv1}
	\end{equation}
	Moreover, given a set $\mathcal{S} \subset \Rn$, the exact convex hull as defined in \eqref{eq:defConvHull} satisfies 
	\begin{equation}
	conv(\mathcal{S},\mathcal{S}) = \bigg \{\sum_{i=1}^{n+1} \lambda_i\,s_i~\bigg|~s_i \in \underbrace{\mathcal{S} \cup \mathcal{S}}_{=\mathcal{S}},~\lambda_i \geq 0,~\sum_{i=1}^{n+1} \lambda_i = 1 \bigg\} = \mathcal{S}.
	\label{eq:proofConv2}
	\end{equation}
	In addition, given a zonotope $\mathcal{Z} = \langle c,G \rangle_Z$, the enclosure of the convex hull according to \eqref{eq:zonoConvHull} satisfies
	\begin{equation}
	\begin{split}
		conv\big(\mathcal{Z},\mathcal{Z} \big) & \overset{\eqref{eq:zonoConvHull}}{=} \big\langle 0.5(c+c),\big[0.5(G+G)~0.5(G-G)~0.5(c-c) \big]\big\rangle_Z  \\[5pt]
		& = \big \langle c, \big[G~ \mathbf{0} ~ \mathbf{0}\big] \big \rangle_Z = \mathcal{Z}.
	\end{split}
	\label{eq:proofConv}
	\end{equation}
	For $conv \big(\mathcal{H}(t_{i}),\mathcal{H}(t_i)\big)$ in \eqref{eq:proofConv1}, both, the exact convex hull as defined in \eqref{eq:defConvHull} as well as the zonotope enclosure computed using \eqref{eq:zonoConvHull} yield the result $\mathcal{H}(t_i)$ according to \eqref{eq:proofConv2} and \eqref{eq:proofConv}, which proves that the difference between the exact convex hull and the enclosure computed according to \eqref{eq:zonoConvHull} converges to 0 for $\Delta t \to 0$.
\end{proof}


\end{document}